\newtheorem{theorem}{Theorem}[section]
\newtheorem{corollary}[theorem]{Corollary}
\newtheorem{lemma}[theorem]{Lemma}
\newtheorem{assumption}[theorem]{Assumption}
\newtheorem{proposition}[theorem]{Proposition}
\newtheorem{definition}[theorem]{Definition}
\newtheorem{example}[theorem]{Example}
\newtheorem{remark}[theorem]{Remark}
\title{Parity-check codes from disjunct matrices}
\author{Kathryn Haymaker$^1$}
\address{$^1$Department of Mathematics, Villanova University, 
Villanova, PA 19085}
\email{kathryn.haymaker@villanova.edu}
\author{Emily McMillon$^2$}
\address{$^2$Department of Mathematics, Rice University, Houston, TX 77005}
\email{em72@rice.edu}
\thanks{The second author was supported by the National Science Foundation grant DMS-2303380.}
\subjclass[2020]{Primary 94B05; 
Seconday 94B25, 
15B99, 
05B20. 
}
\keywords{parity-check codes, coding theory, disjunct matrices, separable matrices, LDPC codes, MDPC codes}
\date{July 28, 2024}
\dedicatory{}
\begin{document}

\begin{abstract}
The matrix representations of linear codes have been well-studied for use as disjunct matrices. However, no connection has previously been made between the properties of disjunct matrices and the parity-check codes obtained from them. This paper makes this connection for the first time. We provide some fundamental results on parity-check codes from general disjunct matrices (in particular, a minimum distance bound). We then consider three specific constructions of disjunct matrices and provide parameters of their corresponding parity-check codes including rate, distance, girth, and density. We show that, by choosing the correct parameters, the codes we construct have the best possible error-correction performance after one round of bit-flipping decoding with regard to a modified version of Gallager's bit-flipping decoding algorithm.
\end{abstract}

\maketitle

\section{Preliminaries} \label{sec:prelims}

We will begin with some general notation we will use, then will move into sections on disjunct matrices and coding theory. Throughout, $[N]$ denotes the set $\{1, 2, \dotsc, N\}$. The finite field with $q$ elements is denoted $\mathbb{F}_q$. Vectors are denoted in boldface, i.e. $\mathbf{x}$. The Boolean sum of two binary vectors $\mathbf{x}$ and $\mathbf{y}$ is denoted $\mathbf{x} \vee \mathbf{y}$ and calculated componentwise, where $0 \vee 0 = 0$ and $0 \vee 1 = 1 \vee 0 = 1 = 1 \vee 1$. The support of a binary string $\mathbf{x}$ is the set of indices $i$ where $x_i=1$.  Finally, define $\mathbf{e}_i$ to be the $i$th standard basis (column) vector, where the length of the vector will be clear from context. 

\subsection{Disjunct Matrices} \label{sec:disjunct}

We will give some introductory definitions and background on disjunct matrices as well as separable matrices, a related class of matrices. Further results as well as some history can be found in \cite{du2000combinatorial}.

$D$-disjunct matrices are usually studied in the context of \textit{nonadaptive group testing}, in which a collection of samples, $\mathcal{S}$---each of which can be either positive or negative---is assigned to a group (a subset of $\mathcal{S}$) via a binary matrix $M$.  The columns of $M$ correspond to samples and the rows to groups; there is a one in position $M_{i,j}$ if and only if sample $j$ is contained in group $i$.  In group testing, the outcome vector for the groups records the positive/negative status of each group, and a decoding procedure is used to assign a positive or negative value to each sample based on the group outcomes. For the purposes of this work, suppose that a group is positive if it contains at least one positive sample, and negative otherwise. The definition of a $D$-disjunct matrix ensures a unique footprint among the samples so that negative samples can be detected even in the presence of up to $D$ other positives. Indeed, suppose that $D$ samples are positive and another distinguished one is negative. The negative sample can be identified as such if that sample is contained in some group that has none of the $D$ positive samples. 

This leads to the following definition.

\begin{definition} \label{def:disjunct}
    A matrix $M$ is \emph{$D$-disjunct} if, for any $D+1$ columns of $M$ with one column designated, there is a row with a $1$ in the designated column and a $0$ in each of the other $D$ columns.
\end{definition}

\begin{example}\label{ex:disjunct}
    The matrix $M$ below is $2$-disjunct. This means that, given any collection of $3$ columns, for any column in that set of $3$, there exists a row that has a $1$ while the other two columns have a $0$.
    \[ M = \begin{bmatrix} 1 & 1 & 1 & 0 & 0 & 0 & 0 & 0 & 0 & 0 \\ 
                           1 & 0 & 0 & 1 & 1 & 0 & 0 & 0 & 0 & 0 \\ 
                           0 & 1 & 0 & 1 & 0 & 1 & 0 & 0 & 0 & 0 \\
                           0 & 0 & 1 & 0 & 1 & 1 & 0 & 0 & 0 & 0 \\
                           1 & 0 & 0 & 0 & 0 & 0 & 1 & 1 & 0 & 0 \\
                           0 & 1 & 0 & 0 & 0 & 0 & 1 & 0 & 1 & 0 \\
                           0 & 0 & 1 & 0 & 0 & 0 & 0 & 1 & 1 & 0 \\
                           0 & 0 & 0 & 1 & 0 & 0 & 1 & 0 & 0 & 1 \\
                           0 & 0 & 0 & 0 & 1 & 0 & 0 & 1 & 0 & 1 \\
                           0 & 0 & 0 & 0 & 0 & 1 & 0 & 0 & 1 & 1 \end{bmatrix}.\]
    For example, consider the collection of columns $c_1$, $c_3$, and $c_6$ of $M$:
    \begin{center}
    \begin{tabular}{c c c}
    $c_1$ & $c_3$ & $c_6$ \\
    \hline
    $1$ & $1$ & $0$ \\
    $1$ & $0$ & $0$ \\
    $0$ & $0$ & $1$ \\
    $0$ & $1$ & $1$ \\
    $1$ & $0$ & $0$ \\
    $0$ & $0$ & $0$ \\
    $0$ & $1$ & $0$ \\
    $0$ & $0$ & $0$ \\
    $0$ & $0$ & $0$ \\
    $0$ & $0$ & $1$
    \end{tabular}
    \end{center}
    Column $c_1$ has a $1$ in row $2$ while $c_3$ and $c_6$ have a $0$. Column $c_3$ has a $1$ in row $7$ while $c_1$ and $c_6$ have a $0$, and column $c_6$ has a $1$ in row $3$ while $c_1$ and $c_3$ have a $0$. However, notice that if we added column $c_8 = (0,0,0,0,1,0,1,0,1,0)^T$ to this collection, there no longer exists a row of $c_3$ that has a $1$ all of the columns $c_1$, $c_6$, and $c_8$ have a $0$. This shows that $M$ is not $3$-disjunct.
\end{example}

Kautz and Singleton \cite{kautz1964nonrandom} proved that for any binary matrix, the  following holds (with credit to Guruswami, Rudra, and Sudan in \cite{essential20} for the updated notation): 

\begin{lemma}[\cite{kautz1964nonrandom}]
\label{lemma-disjunct} 
Let $1\leq D< N$  
be an integer and $M$ be a binary $t\times N$ matrix, such that 
\begin{enumerate} 
\item For every $j\in \{1, 2, \ldots, N\}$, the $j$th column has at least $w_{\min}$ ones in it. 

\item For every $i\neq j$ in $\{1, 2, \ldots, N\}$, the $i$ and $j$th columns have at most $a_{\max}$ ones in common 
\end{enumerate} 

for some integers $a_{\max}\leq w_{\min}\leq t$. Then $M$ is a 
\[ D=\left\lfloor \frac{w_{\min}-1}{a_{\max}} \right\rfloor\] disjunct matrix.  
\end{lemma}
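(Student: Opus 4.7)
The plan is a direct pigeonhole/union-bound argument on the support of the designated column. Fix any collection of $D+1$ columns of $M$ with one designated; call the designated column $c$ and the others $c_1, \ldots, c_D$. Let $S$ denote the set of row indices where $c$ has a $1$. By hypothesis (1), $|S| \geq w_{\min}$.

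Next I would restrict attention to $S$ and track which rows of $S$ also have a $1$ in some $c_k$. For each $k \in \{1, \ldots, D\}$, let $S_k \subseteq S$ be the set of rows where both $c$ and $c_k$ have a $1$; hypothesis (2) gives $|S_k| \leq a_{\max}$. The rows of $S$ in which \emph{some} non-designated column also carries a $1$ form the union $\bigcup_{k=1}^{D} S_k$, whose size is at most $\sum_{k=1}^{D} |S_k| \leq D \cdot a_{\max}$ by a simple union bound.

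The final step is to verify that this union cannot cover all of $S$. Because $D = \lfloor (w_{\min}-1)/a_{\max} \rfloor$, we have $D \cdot a_{\max} \leq w_{\min} - 1 < w_{\min} \leq |S|$, so there is a row $r \in S \setminus \bigcup_k S_k$. By construction $r$ has a $1$ in the designated column and a $0$ in each of the other $D$ columns, which is exactly the condition in Definition \ref{def:disjunct}. Since the $D+1$ columns and the designated one were arbitrary, $M$ is $D$-disjunct.

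There is no real obstacle here; the only bookkeeping point is making sure the floor function is handled correctly, i.e., that $D = \lfloor (w_{\min}-1)/a_{\max} \rfloor$ implies $D \cdot a_{\max} \leq w_{\min} - 1$, which is immediate from the definition of the floor. Everything else is a one-line application of inclusion–exclusion (in its trivial union-bound form) applied to the support of the designated column.
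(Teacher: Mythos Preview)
Your argument is correct and is exactly the standard pigeonhole/union-bound proof of this classical lemma. Note, however, that the paper does not supply its own proof of Lemma~\ref{lemma-disjunct}: the result is quoted from \cite{kautz1964nonrandom} (with updated notation credited to \cite{essential20}) and used as background, so there is no in-paper proof to compare against. Your write-up would serve perfectly well as a self-contained justification of the cited statement.
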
 

Of interest in the construction of parity-check codes is the density of the parity-check matrices. From \cite{dyachkov1982bounds}, given an arbitrary $t \times n$ $D$-disjunct matrix, $M$, $t < D (1 + o(1)) \ln (n)$. If we assume that $M$ has constant column weight $w$ (a common condition of algebraically constructed disjunct matrices), then $M$ has row weight $\frac{n w}{t} < \frac{nw}{D \ln(n)}$. In particular, the row weight of $M$ is at most $O\left( \frac{n}{\ln(n)}\right)$.

A class of matrices similar to disjunct matrices are \textit{separable matrices}. In these, pairs of sets of columns of size $D$ are compared to each other. In the context of group testing, the $D$-separable property implies that each collection of $D$ positive samples induces a unique footprint of groups that are positive.  By matching the set of positive groups with the appropriate samples, the $D$ positives can be identified. Both $D$-separable and $D$-disjunct matrices allow for $D$ positives to be identified, but the additional structure of $D$-disjunct matrices results in a reduction in decoding complexity. 

\begin{definition} \label{def:separable}
    A matrix $M$ is \emph{$D$-separable} (or \emph{$\bar{D}$-separable}) if no two sets of $D$ columns (resp.\ up to $D$ columns) have the same Boolean sum.
\end{definition}

Clearly, any matrix that is $\bar{D}$-separable is also $D$-separable. We conclude this subsection with a well-known result on the relationship between $D$-disjunct and $\bar{D}$-separable matrices.

\begin{theorem} [\cite{kautz1964nonrandom}]
    Any $D$-disjunct matrix is $\bar{D}$-separable and any $\bar{D}$-separable matrix is $(D-1)$-disjunct.
\end{theorem}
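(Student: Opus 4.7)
The plan is to handle the two implications separately; both are short arguments that translate between "Boolean sum" statements and "which column has a 1 in which row" statements, and the only care needed is the off-by-one bookkeeping between $D$-disjunct (which compares $D+1$ columns) and $\bar D$-separable (which compares subsets of size up to $D$).

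For the first implication, $D$-disjunct $\Rightarrow$ $\bar D$-separable, I would argue by contradiction. Suppose there exist distinct sets $S_1, S_2 \subseteq [N]$ with $|S_1|, |S_2| \leq D$ whose columns have the same Boolean sum. Distinctness gives some index $j$ in the symmetric difference; assume WLOG $j \in S_1 \setminus S_2$. Then apply the $D$-disjunct hypothesis with $j$ as the designated column and the columns of $S_2$ (padded to size $D$ with arbitrary other columns, if $|S_2| < D$, to match Definition \ref{def:disjunct} literally). The resulting row has a $1$ in column $j$ and $0$s throughout $S_2$. That row contributes a $1$ to the Boolean sum of $S_1$ (via $j$) and a $0$ to the Boolean sum of $S_2$, contradicting the assumption that these sums agree.

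For the second implication, $\bar D$-separable $\Rightarrow$ $(D-1)$-disjunct, I would argue directly. Take any $D$ columns with one designated; call the designated column $j$ and let $T$ denote the other $D-1$ columns. Consider the two subsets $S_1 = T \cup \{j\}$ and $S_2 = T$ of sizes $D$ and $D-1$. These are distinct subsets of $[N]$ of size at most $D$, so by $\bar D$-separability their Boolean sums differ in some row $i$. Since $S_1$ and $S_2$ differ only by the single column $j$, the only way the sums can disagree at row $i$ is that $M_{i,j} = 1$ while $M_{i,k} = 0$ for every $k \in T$. This row certifies the $(D-1)$-disjunct property at the chosen configuration.

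I do not anticipate a real obstacle. The only point worth stating carefully is the first-direction padding remark, which is why I would foreground the observation that $D$-disjunctness immediately implies the same separating-row property for any designated column together with at most $D$ other columns. With that observation made once, both directions become one-line applications of the respective definitions.
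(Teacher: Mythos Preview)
The paper does not prove this theorem; it is quoted from \cite{kautz1964nonrandom} without proof, so there is no in-paper argument to compare against. Your proposal is correct and is essentially the standard proof of this classical fact. One small point worth tightening in the second implication: when you say ``the only way the sums can disagree at row $i$'' you are implicitly using that $S_2 \subseteq S_1$, so the Boolean sum of $S_1$ dominates that of $S_2$ coordinatewise; hence any discrepancy must be a $1$ for $S_1$ and a $0$ for $S_2$, which then forces $M_{i,j}=1$ and $M_{i,k}=0$ for all $k\in T$. Stating that containment explicitly makes the step airtight.
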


Further, a binary matrix is $\overline{D+1}$-separable if and only if it is both $(D+1)$-separable and $D$-disjunct \cite{hwang2006pooling}. 

\subsection{Coding Theory} \label{sec:codingtheory}

We will give some introductory definitions and background on linear codes. A good reference for additional information is \cite{huffman2010fundamentals}.

An $[n,k]_q$ \textit{linear error-correcting code} is a $k$-dimensional subspace of the $n$-dimensional vector space $\mathbb{F}_q^n$. This subspace can be defined as the kernel of an $m \times n$ matrix $H$ with row rank $n-k$, called a \textit{parity-check matrix} of the code. In other words, given a parity-check matrix $H$, the code $\mathcal{C}(H)$, or just $\mathcal{C}$ when $H$ is understood, is given by
\[ \mathcal{C}(H) = \{ \mathbf{x} \in \mathbb{F}_q^n \mid H \mathbf{x}^T = \mathbf{0}\}. \]

The \textit{minimum distance} of a linear code $\mathcal{C}$ is $d_{\text{min}}(\mathcal{C}) = \min \limits_{\mathbf{c} \in \mathcal{C}} \text{wt}_H(\mathbf{c})$ where $\text{wt}_H(\mathbf{c})$ is the \textit{Hamming distance}, or number of nonzero coordinates, of $\mathbf{c}$. If the minimum distance of a linear code $\mathcal{C}$ is known, it is called an $[n,k,d]_q$ code. Linear codes over $\mathbb{F}_2$ are called \textit{binary linear codes}. A binary linear code $\mathcal{C}$ is called an $[n,k,d]$ code (with the subscript $2$ omitted). A well-known and very useful fact is that the minimum distance of a code $\mathcal{C}(H)$ is the smallest size of a collection of linearly dependent columns of $H$.

The \textit{rate} of an $[n,k,d]_q$ code $\mathcal{C}$ is the ratio of information symbols to transmitted symbols, $\frac{k}{n}$. In this work, we are primarily concerned with codes defined by their parity-check matrices. If $H$ is an $m \times n$ parity check matrix for $\mathcal{C}$, note that its rate is given by $1 - \frac{n-k}{n}$. If $H$ has redundant rows (i.e. if $m > n-k$), then $1 - \frac{m}{n}$ gives a lower bound on the rate of $\mathcal{C}$. It is often not straightforward to determine the rank of a parity-check matrix, so results in this paper will focus on finding a lower bound on code rates.

Parity-check codes are often associated with a type of bipartite graph called a Tanner graph \cite{tanner1981recursive}. Given a parity-check matrix $H$, the bipartite graph $G = (V,W;E)$ such that the vertex set $V$ (the set of ``variable nodes'') corresponds to the columns of $H$, the vertex set $W$ (the set of ``check nodes'') corresponds to the rows of $H$, and $v_t \in V$ and $w_s \in W$ are adjacent if and only if the $(s,t)$ entry of $H$ is nonzero is called the \textit{Tanner graph} of $H$. In other words, $H$ is the biadjacency matrix of the graph $G$. An example of a parity-check matrix and its Tanner graph is given in Example~\ref{ex:hammingcode}.

We note that a single code always has multiple parity-check matrices, and so also always has multiple Tanner graph representations. The choice of parity-check matrix has an effect on iterative decoder performance. One important consideration in choice of parity-check matrix is the girth of the Tanner graph. The \textit{girth} of a graph $G$ is the length of a shortest cycle in $G$. For bipartite graphs the girth is even. Because iterative decoding is optimal on cycle-free Tanner graphs \cite{gallager1962low, wiberg1996codes}, large girths are desirable. At the very least, it is desirable for parity-check codes to have Tanner graphs free of $4$-cycles, so with girth at least $6$. In addition, over the binary erasure channel (BEC), iterative decoder failure is characterized by a class of Tanner graph substructures called \textit{stopping sets} \cite{di2002finite}.

\begin{definition}
    A \emph{stopping set} is a subset $S \subseteq V$ of variable nodes in a code's Tanner graph $G = (V,W)$ such that each neighbor of $S$ is connected to $S$ at least twice. Equivalently, a stopping set is a subset $\mathcal{V}$ of columns of a code's parity-check matrix such that each row, when restricted to $\mathcal{V}$, does not have weight $1$.  
\end{definition}

Of particular interest is the minimum size of a stopping set in a given code, which is dependent on the chosen parity-check matrix or Tanner graph representation.

\begin{definition}
    Given a parity-check matrix $H$, the the \emph{minimum stopping set size} or \emph{minimum stopping distance} of $\mathcal{C}(H)$ is denoted $s_{\text{min}}(H)$ and is the smallest size of any stopping set in $\mathcal{C}(H)$.
\end{definition}

It is clear that any codeword is also a stopping set. Consequently, for any parity check matrix $H$ of a code $\mathcal{C}$, $s_{\text{min}}(H) \leq d_{\text{min}}(\mathcal{C})$.  Because small stopping sets are probabilistically more common and impact iterative decoding, it is desirable that the smallest size of a stopping set in a given code be as large as possible. 
Hence, code representations for which $s_{\text{min}}(H) = d_{\text{min}}(\mathcal{C})$ are desirable.

\begin{example}\label{ex:hammingcode}
    The matrix $H$ below is the parity-check matrix for a $[7,4,3]$ binary code.
    \[ H = \begin{bmatrix} 0 & 0 & 0 & 1 & 1 & 1 & 1 \\
                           0 & 1 & 1 & 0 & 0 & 1 & 1 \\
                           1 & 0 & 1 & 0 & 1 & 0 & 1 \end{bmatrix}\]
    Its Tanner graph is given in Figure~\ref{fig:tannergraph}. The set of variable nodes $\mathcal{V} = \{v_1, v_4, v_5\}$ is a stopping set, because the set of neighbors of $\mathcal{V}$, $\{c_1, c_3\}$, is such that each of $c_1$ and $c_3$ is connected to at least two variable nodes in $\mathcal{V}$.
\end{example}

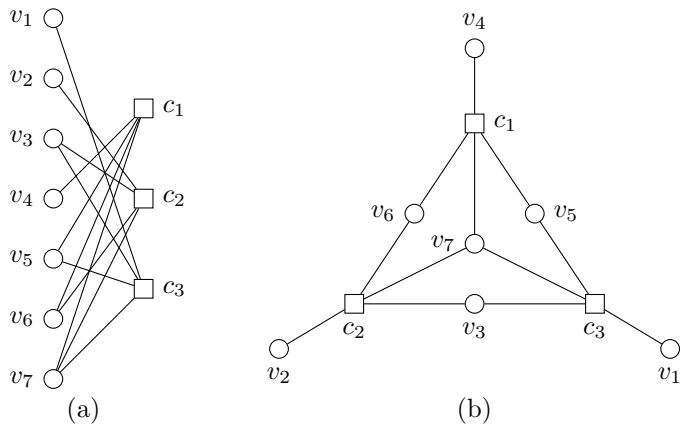
\begin{figure}[htb] 
\centering
\begin{tikzpicture}[scale=0.4,square/.style={regular polygon,regular polygon sides=4}]

    \node[fill=white,draw=black,circle,label=left:{$v_1$},scale=0.75] (v1) at (0,0) {};
    \node[fill=white,draw=black,circle,label=left:{$v_2$},scale=0.75] (v2) at (0,-2) {};
    \node[fill=white,draw=black,circle,label=left:{$v_3$},scale=0.75] (v3) at (0,-4) {};
    \node[fill=white,draw=black,circle,label=left:{$v_4$},scale=0.75] (v4) at (0,-6) {};
    \node[fill=white,draw=black,circle,label=left:{$v_5$},scale=0.75] (v5) at (0,-8) {};
    \node[fill=white,draw=black,circle,label=left:{$v_6$},scale=0.75] (v6) at (0,-10) {};
    \node[fill=white,draw=black,circle,label=left:{$v_7$},scale=0.75] (v7) at (0,-12) {};

    \node[fill=white,draw=black,square,label=right:{$c_1$},scale=0.75] (c1) at (3,-3) {};
    \node[fill=white,draw=black,square,label=right:{$c_2$},scale=0.75] (c2) at (3,-6) {};
    \node[fill=white,draw=black,square,label=right:{$c_3$},scale=0.75] (c3) at (3,-9) {};

    \draw[thin] (v1) to (c3);
    \draw[thin] (v2) to (c2);
    \draw[thin] (v3) to (c2);
    \draw[thin] (v3) to (c3);
    \draw[thin] (v4) to (c1);
    \draw[thin] (v5) to (c1);
    \draw[thin] (v5) to (c3);
    \draw[thin] (v6) to (c1);
    \draw[thin] (v6) to (c2);
    \draw[thin] (v7) to (c1);
    \draw[thin] (v7) to (c2);
    \draw[thin] (v7) to (c3);

    \node[label=below:{(a)}] at (1,-12) {};

    \node[fill=white,draw=black,circle,label=below:{$v_1$},scale=0.75] (v1) at (20.5,-11) {};
    \node[fill=white,draw=black,circle,label=below:{$v_2$},scale=0.75] (v2) at (7.5,-11) {};
    \node[fill=white,draw=black,circle,label=below:{$v_3$},scale=0.75] (v3) at (14,-9.5) {};
    \node[fill=white,draw=black,circle,label=above:{$v_4$},scale=0.75] (v4) at (14,-1) {};
    \node[fill=white,draw=black,circle,label=right:{$v_5$},scale=0.75] (v5) at (16,-6.5) {};
    \node[fill=white,draw=black,circle,label=left:{$v_6$},scale=0.75] (v6) at (12,-6.5) {};
    \node[fill=white,draw=black,circle,label=left:{$v_7$},scale=0.75] (v7) at (14,-7.5) {};

    \node[fill=white,draw=black,square,label=right:{$c_1$},scale=0.75] (c1) at (14,-3.5) {};
    \node[fill=white,draw=black,square,label=below:{$c_2$},scale=0.75] (c2) at (10,-9.5) {};
    \node[fill=white,draw=black,square,label=below:{$c_3$},scale=0.75] (c3) at (18,-9.5) {};

    \draw[thin] (v1) to (c3);
    \draw[thin] (v2) to (c2);
    \draw[thin] (v3) to (c2);
    \draw[thin] (v3) to (c3);
    \draw[thin] (v4) to (c1);
    \draw[thin] (v5) to (c1);
    \draw[thin] (v5) to (c3);
    \draw[thin] (v6) to (c1);
    \draw[thin] (v6) to (c2);
    \draw[thin] (v7) to (c1);
    \draw[thin] (v7) to (c2);
    \draw[thin] (v7) to (c3);

    \node[label=below:{(b)}] at (14,-12) {};

\end{tikzpicture}

\caption{(a) The Tanner graph of $H$ given in Example~\ref{ex:hammingcode}. (b) A planar representation of the graph in (a).} \label{fig:tannergraph}
\end{figure}

Two commonly studied classes of parity-check codes are low-density parity-check (LDPC) codes and moderate-density parity-check (MDPC) codes. These classes of parity-check codes are differentiated by the row weights of their parity-check matrices. A family of binary linear parity-check codes $\{\mathcal{C}_i\}$ with parity check matrices $H_i$ and lengths $n_i$ is called MDPC if the row weight of $H_i$ is $O(\sqrt{n_i})$ \cite{ouzan2009moderate}. LDPC codes are similarly defined, but there is less consensus on the requirements for the row weights of a family of LDPC codes. Definitions range from ``sparse relative to the length of the code'' to ``anything less than MDPC'' to ``row weights less than $10$''. For the purposes of this paper, we use the row weight less than $10$ definition (as in \cite{misoczki2013mdpc}). Finally, if the weight of every column of $H_i$ is some constant $v_i$ and the weight of any row of $H_i$ is some constant $w_i$, we say that $\mathcal{C}_i$ (whether LDPC or MDPC) is of type $(v_i,w_i)$.

The performance of a parity-check code under iterative decoding is highly dependent on its chosen parity-check matrix. An important property of a matrix to be used in bit-flipping decoding is the \textit{maximum column intersection}, which is defined as, for a binary matrix $H$, the maximum size of the intersection of the supports for any pair of columns of $H$. Recall from Section~\ref{sec:disjunct} that we denote this value $a_{\max}$. In the context of the Tanner graph of a given parity-check matrix, if $a_{\max} \geq 2$, the Tanner graph has $4$-cycles and so has girth $4$ (the smallest possible girth for a bipartite graph). If $a_{\max} = 1$, the Tanner graph has girth at least $6$.

The following result from Tillich \cite{tillich2018decoding} gives the amount of errors that can always be corrected within one round of a bit-flipping decoding algorithm under majority-logic decoding.

\begin{proposition}[\cite{tillich2018decoding}] \label{prop:tillich}
    Let $C$ be a code with parity check matrix $H$ such that every column has weight at least $v$ and whose maximum column intersection is $a_{\max}$. Performing majority-logic decoding based on this matrix corrects all errors of weight $\leq \left\lfloor \frac{v}{2 a_{\max}} \right\rfloor$.
\end{proposition}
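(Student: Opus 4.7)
The plan is to show that a single round of majority-logic bit-flipping---flip coordinate $j$ precisely when strictly more than half of the parity checks containing $j$ are unsatisfied---cleanly separates the erroneous coordinates from the correct ones whenever the error pattern $\mathbf{e}$ has weight $t \leq T := \lfloor v/(2a_{\max}) \rfloor$. Let $U_j$ denote the number of rows $i$ with $H_{i,j} = 1$ such that the syndrome bit $(H\mathbf{e}^T)_i$ equals $1$. The goal is to prove $U_j > v/2$ exactly when $e_j = 1$, so that parallel majority flipping inverts precisely the erroneous positions and recovers the transmitted codeword.

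The workhorse is the defining property of $a_{\max}$: for any two distinct columns $j, j'$ of $H$, the number of rows in which both have a $1$ is at most $a_{\max}$. Fix $j$ and split on the value of $e_j$. If $e_j = 1$, a parity check $i$ containing $j$ is \emph{satisfied} iff the sum of the error bits at the other coordinates in that check is $\equiv 1 \pmod 2$; in particular, the check must contain at least one error bit $j' \neq j$. Summing the column-intersection bound over the $t-1$ other error positions, the number of such satisfied checks is at most $(t-1) a_{\max}$. Combined with the column-weight hypothesis, $U_j \geq v - (t-1) a_{\max}$. If instead $e_j = 0$, a check $i$ containing $j$ is \emph{unsatisfied} iff it contains an odd (hence positive) number of error bits among the other coordinates; the same counting then gives $U_j \leq t\, a_{\max}$.

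Substituting $t \leq T$ closes the argument. For $e_j = 0$, $U_j \leq T a_{\max} \leq v/2$, so coordinate $j$ is not flipped. For $e_j = 1$, $(t-1) a_{\max} \leq (T-1) a_{\max} \leq v/2 - a_{\max} < v/2$, so $U_j \geq v - (t-1) a_{\max} > v/2$ and coordinate $j$ is flipped. Hence a single parallel round of majority-logic decoding inverts exactly the error support, recovering $\mathbf{c}$. I do not anticipate a real obstacle: the argument is a short double-counting exercise, and the only care required is to maintain the strict inequality on the error side of the threshold and the non-strict inequality on the correct side, both of which fall out directly from the floor in the definition of $T$.
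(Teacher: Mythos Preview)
The paper does not supply its own proof of this proposition: it is quoted verbatim from \cite{tillich2018decoding} and used as a black box, so there is no in-paper argument to compare against. Your proposal is the standard double-counting proof and is essentially correct.

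One small point of care: you state the flipping rule as ``more than half of the parity checks containing $j$,'' which is a threshold of $w_j/2$ where $w_j\geq v$ is the actual weight of column $j$, but then phrase the goal as showing $U_j>v/2$ on the error side. These are not the same threshold when columns are not constant-weight. Your bounds still close the gap, since on the error side you in fact have $U_j\geq w_j-(t-1)a_{\max}$ and $2(t-1)a_{\max}\leq 2(T-1)a_{\max}\leq v-2a_{\max}<v\leq w_j$, giving $U_j>w_j/2$; on the correct side $U_j\leq t\,a_{\max}\leq v/2\leq w_j/2$. Just make the threshold $w_j/2$ explicit throughout so the inequalities visibly match the decoding rule.
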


In the context of parity-check matrices, those that minimize $a_{\max}$ while maximizing the column weights of a matrix yield optimal decoder performance. In Section~\ref{sec:specific}, we provide three methods of constructing parity-check matrices that minimize $a_{\max}$ (i.e. have $a_{\max} = 1$), and thus for fixed column weight $v$, have optimal decoder performance. We also provide the additional constraints required on these constructions to maximize their column weights (while still being MDPC). 

\section{General Results} \label{sec:generalresults}

In this section, we examine the use of general disjunct and separable matrices as parity-check matrices for codes. We begin by deriving the following simple lower bound on the minimum distance of such a code.

\begin{proposition} \label{prop:mindis}
    Let $M$ be an $m \times n$ $D$-disjunct matrix. Then the code $\mathcal{C}(M)$ has minimum distance $d_{\text{min}}(\mathcal{C}) \geq D+2$.
\end{proposition}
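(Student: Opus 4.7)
The plan is to translate the statement into the language of linear (in)dependence of columns. Since the paper notes that $d_{\min}(\mathcal{C}(M))$ equals the smallest number of $\mathbb{F}_2$-linearly dependent columns of $M$, it suffices to show that every set of at most $D+1$ columns of $M$ is linearly independent over $\mathbb{F}_2$. The proposed bound $d_{\min} \geq D+2$ then follows immediately.

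To prove this linear independence, I would argue by contradiction. Suppose there is a set $S = \{c_{i_1}, \dotsc, c_{i_k}\}$ of $k \leq D+1$ columns of $M$ with $\sum_{j=1}^k c_{i_j} = \mathbf{0}$ over $\mathbb{F}_2$. Pick any column in $S$ as the ``designated'' column, say $c_{i_1}$. Because $k - 1 \leq D$, we may (if $k-1 < D$) enlarge $S \setminus \{c_{i_1}\}$ to a set of exactly $D$ columns by adjoining arbitrary other columns of $M$; this does not affect the argument below. By Definition~\ref{def:disjunct}, there exists a row $r$ in which the designated column $c_{i_1}$ has a $1$ while each of the other chosen columns (in particular $c_{i_2}, \dotsc, c_{i_k}$) has a $0$. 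Restricted to row $r$, the $\mathbb{F}_2$ sum of the columns in $S$ is therefore $1 + 0 + \dotsb + 0 = 1$, contradicting $\sum_{j=1}^k c_{i_j} = \mathbf{0}$.

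Hence no nonempty subset of size at most $D+1$ is linearly dependent, so the minimum distance satisfies $d_{\min}(\mathcal{C}(M)) \geq D+2$, as claimed.

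The argument is essentially a one-line translation of the disjunct property, so there is no genuine obstacle; the only small subtlety is the bookkeeping between ``at most $D+1$ columns'' and the definition, which asks for exactly $D+1$ columns with one designated. This is handled by noting that having a row with a $1$ in the designated column and $0$'s in $D$ specified columns automatically yields a $1$ in the designated column and $0$'s in any subset of those $D$ columns, so the definition applies uniformly to sets of size $\leq D+1$.
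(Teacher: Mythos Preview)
Your proof is correct and follows essentially the same approach as the paper: both arguments use the disjunct property to locate a row where a designated column has a $1$ and the remaining columns have $0$'s, forcing any sum of at most $D+1$ columns to be nonzero. The only cosmetic difference is that you frame it as a proof by contradiction and explicitly handle the bookkeeping of padding out to exactly $D$ other columns, whereas the paper argues directly and implicitly uses that $D$-disjunct implies $D'$-disjunct for $D' \leq D$.
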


\begin{proof}
    Let $J \subseteq [n]$ be such that $|J| \leq D+1$. Because $M$ is $D$-disjunct, for any $i \in J$, there exists some row in the column vector $\mathbf{m}_i$ that is a $1$ while all column vectors in the collection $\{\mathbf{m}_j\}_{j \in J\setminus\{i\}}$ are equal to zero in that row. This implies that $\sum_{j \in J} \mathbf{m}_j \neq \mathbf{0}$. So any set of $D+1$ or fewer columns is not linearly dependent, and hence $d_{\text{min}}(\mathcal{C}) \geq D+2$.
\end{proof}

Note that if a matrix $M$ is $D$-disjunct but not $(D+1)$-disjunct, it does not necessarily have minimum distance $D+2$, and hence this is only a lower bound. See Example~\ref{ex:distance}.

\begin{example}\label{ex:distance}
    Consider the matrix
    \[ M = \begin{bmatrix} 1 & 0 & 0 & 1 \\ 0 & 1 & 0 & 1 \\ 0 & 0 & 1 & 1 \end{bmatrix}.\]
    $M$ is $0$-disjunct, but the code with parity check matrix $M$ has minimum distance $4$.
\end{example}

\begin{proposition}\label{prop:stoppingdis}
    Let $M$ be an $m \times n$ $D$-disjunct matrix. Then the code $\mathcal{C}(M)$ with parity check matrix $M$ has stopping distance $s_{\text{min}}(M) \geq D+2$.
\end{proposition}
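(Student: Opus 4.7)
The plan is to mirror the structure of the proof of Proposition~\ref{prop:mindis} but use the stopping-set characterization at the end instead of linear dependence. Specifically, I would argue by contradiction: suppose that $\mathcal{V} \subseteq [n]$ indexes a stopping set of the Tanner graph of $M$ with $|\mathcal{V}| \leq D+1$. By the equivalent definition of a stopping set, when $M$ is restricted to the columns indexed by $\mathcal{V}$, no row has weight exactly $1$.

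Next I would apply $D$-disjunctness to produce a row that contradicts this. Pick any designated index $i \in \mathcal{V}$. If $|\mathcal{V}| = D+1$, Definition~\ref{def:disjunct} directly yields a row of $M$ with a $1$ in column $i$ and a $0$ in each of the other $D$ columns of $\mathcal{V}$. If $|\mathcal{V}| < D+1$, I would first extend $\mathcal{V}$ arbitrarily to a set $\mathcal{V}' \supseteq \mathcal{V}$ of exactly $D+1$ columns (possible since any $D$-disjunct matrix has at least $D+1$ columns) and apply the definition to $\mathcal{V}'$ with designated column $i$; the resulting row has a $1$ in column $i$ and a $0$ on $\mathcal{V}' \setminus \{i\} \supseteq \mathcal{V} \setminus \{i\}$. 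In either case, the restriction of this row to $\mathcal{V}$ has weight exactly $1$, contradicting the stopping set property. Therefore no stopping set of size $\leq D+1$ exists, giving $s_{\min}(M) \geq D+2$.

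There is essentially no serious obstacle: the only very small subtlety is the case $|\mathcal{V}| < D+1$, which is handled by extending to a $(D+1)$-subset as above. The proof is a clean adaptation of Proposition~\ref{prop:mindis}, and in fact it also gives a cleaner conceptual explanation of that result via the inequality $s_{\min}(M) \leq d_{\min}(\mathcal{C}(M))$ noted earlier in the paper.
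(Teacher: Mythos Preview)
Your proposal is correct and follows essentially the same approach as the paper: both arguments pick any subset $J$ of at most $D+1$ columns, invoke $D$-disjunctness at a designated column to obtain a row of weight exactly one on $J$, and conclude that $J$ cannot be a stopping set. Your version is framed as a contradiction and is slightly more explicit about extending to a full $(D+1)$-set when $|J|<D+1$, but the underlying idea is identical, and your closing remark about recovering Proposition~\ref{prop:mindis} via $s_{\min}(M)\leq d_{\min}(\mathcal{C})$ is exactly the observation the paper makes immediately after its proof.
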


\begin{proof}
    The proof is very similar to that of Proposition~\ref{prop:mindis}. Let $J \subseteq [n]$ be such that $|J| \leq D+1$. Because $M$ is $D$-disjunct, for any $i \in J$, there exists some row in the column vector $\mathbf{m}_i$ that is a $1$ while all other column vectors in the collection $\{\mathbf{m}_j\}_{j \in J\setminus\{i\}}$ are equal to zero in that row. In particular, this implies that for any collection of $D+1$ or fewer columns, some neighbor of the variable nodes indexed by $J$ is connected to the set of variable nodes only once. Hence $\mathcal{C}$ has no stopping sets of size $D+1$ or smaller, and $s_{\text{min}}(\mathcal{C}) \geq D+2$.
\end{proof}

One could use Proposition~\ref{prop:stoppingdis} to infer Proposition~\ref{prop:mindis}, because $s_{\text{min}} \geq d_{\text{min}}$.  However, the direct proof of Proposition~\ref{prop:stoppingdis} is still worth writing, as it is a fundamental result worthy of its own statement. In contrast, for $\bar{D}$-separable matrices, we begin by providing a lower bound on the stopping distance.

\begin{proposition} \label{prop:stoppingdis2}
    Let $M$ be an $m \times n$ $\bar{D}$-separable matrix. Then the code $\mathcal{C}(M)$ with parity check matrix $M$ has stopping distance $s_{min}(M) \geq D+2$.
\end{proposition}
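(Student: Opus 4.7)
The plan is to derive a contradiction from the assumption that $\mathcal{C}(M)$ contains a stopping set $J$ of size at most $D+1$, by using $J$ to produce two distinct subsets of columns of size at most $D$ with identical Boolean sums. The central observation, which also drives the proof of Proposition~\ref{prop:stoppingdis}, is that if $J \subseteq [n]$ is a stopping set, then every $1$-entry of any column $\mathbf{m}_i$ with $i \in J$ is matched by a $1$-entry in some other column $\mathbf{m}_j$ with $j \in J \setminus \{i\}$. In column-vector language this says
\[ \bigvee_{j \in J} \mathbf{m}_j \;=\; \bigvee_{j \in J \setminus \{i\}} \mathbf{m}_j \qquad \text{for every } i \in J,\]
so removing any single column from a stopping set leaves the Boolean sum unchanged.

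Given this identity, I would split on $|J|$. If $|J| \leq D$, then $J$ and $J \setminus \{i\}$ (for any $i \in J$) are distinct subsets each of size at most $D$ that share the same Boolean sum, directly contradicting $\bar{D}$-separability. If $|J| = D+1$, then $J$ itself is too large to feed into the separability condition, so instead I would pick two distinct indices $i, i' \in J$ and compare $J \setminus \{i\}$ with $J \setminus \{i'\}$: both have size exactly $D$, are distinct, and by two applications of the identity above both equal $\bigvee_{j \in J} \mathbf{m}_j$, hence each other. This again contradicts $\bar{D}$-separability, completing the argument.

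The main subtlety, rather than an obstacle, is the case split on $|J|$: the whole set $J$ cannot be used in the separability comparison when $|J| = D+1$ because $\bar{D}$-separability only constrains subsets of size up to $D$. Handling this by deleting two different elements is the only wrinkle; everything else is a direct translation of the stopping-set condition into a statement about preserved Boolean sums. I expect the writeup to be close in length to that of Proposition~\ref{prop:stoppingdis}, with the $\bar{D}$-separability hypothesis delivering the contradiction in one step once the repeated-sum identity is in hand.
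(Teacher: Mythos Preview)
Your argument is correct and, at its core, coincides with the paper's for the critical case $|J|=D+1$: both delete two different columns from a stopping set of size $D+1$ and observe that the resulting size-$D$ subsets have the same Boolean sum, contradicting $\bar D$-separability. The difference is one of packaging. You isolate the clean identity $\bigvee_{j\in J}\mathbf{m}_j=\bigvee_{j\in J\setminus\{i\}}\mathbf{m}_j$ (valid for any $i$ in a stopping set $J$) and apply it twice, whereas the paper reproves this equality for the specific pair $J\setminus\{1\}$, $J\setminus\{2\}$ via a row-by-row case analysis. For $|J|\le D$ the paper takes a different route entirely: it invokes the implication $\bar D$-separable $\Rightarrow$ $(D-1)$-disjunct and then Proposition~\ref{prop:stoppingdis}, while you handle this case directly by comparing $J$ with $J\setminus\{i\}$. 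Your treatment is more self-contained and uniform; the paper's has the virtue of reusing previously stated results. One small point to tidy in your writeup: in the case $|J|=D+1$ you need $|J|\ge 2$ to choose distinct $i,i'$, which is automatic once $D\ge 1$; the statement is vacuous for $D=0$.
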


\begin{proof}
    Because $M$ is $\bar{D}$-separable, it is $(D-1)$-disjunct, so by Proposition~\ref{prop:stoppingdis}, $s_{\text{min}}(M) \geq D+1$. We will show that $s_{\text{min}}(M) \neq D+1$.

    Suppose by way of contradiction that there exists some collection of $D+1$ columns of $M$, $\mathcal{M} = \{\mathbf{m}_i\}_{i=1}^{D+1}$, such that any row that has a $1$ in some column vector in $\mathcal{M}$ has at least two column vectors with a $1$ in that row. (In other words, a set of columns corresponding to a stopping set of weight $D+1$.) Because $M$ is $\bar{D}$-separable, the Boolean sum of any collection of $D$ of the columns in $\mathcal{M}$ is unique. 
    
    In a slight abuse of notation, let $\mathbf{m}_i = \{m_{i,1}, \dotsc, m_{i,m}\}^T$. Let $I \subseteq [m]$ be the set of indices such that $m_{1,i} = m_{2,i} = 1$.

    Consider the Boolean sum $\bigvee_{i=3}^{D+1} \mathbf{m}_i$. We claim that $\mathbf{m}_1 \lor \left( \bigvee_{i=3}^{D+1} \mathbf{m}_i \right) = \mathbf{m}_2 \lor \left( \bigvee_{i=3}^{D+1} \mathbf{m}_i \right)$.
    
    \begin{itemize}
        \item Consider first any row index $i \in I$. Because $m_{1,i} = m_{2,i} = 1$, it is clear that $m_{1,i} \vee \left( \bigvee_{j=3}^{D+1} m_{j,i}\right) = m_{2,i} \vee \left( \bigvee_{j=3}^{D+1} m_{j,i} \right) = 1$. 
        \item Now consider any row index $i \in [m] \setminus I$. There are two cases. If $m_{1,i} = m_{2,i} = 0$, it is clear that $m_{1,i} \vee \left( \bigvee_{j=3}^{D+1} m_{j,i}\right) = m_{2,i} \vee \left( \bigvee_{j=3}^{D+1} m_{j,i} \right) = \bigvee_{j=3}^{D+1} m_{j,i}$. If one of $m_{1,i}$ and $m_{2,i}$ is $1$ and one is $0$, because each row in $\mathcal{M}$ that has a $1$ has more than one $1$, there must be some $\mathbf{m}_k \in \{\mathbf{m}_{i}\}_{i=3}^{D+1}$ such that $m_{k,i} = 1$. Hence $m_{1,i} \vee \left( \bigvee_{j=3}^{D+1} m_{j,i}\right) = m_{2,i} \vee \left( \bigvee_{j=3}^{D+1} m_{j,i} \right) = 1$.
    \end{itemize}

    This shows that $\mathbf{m}_1 \lor \left( \bigvee_{i=3}^{D+1} \mathbf{m}_i \right) = \mathbf{m}_2 \lor \left( \bigvee_{i=3}^{D+1} \mathbf{m}_i \right)$, contradicting that $M$ is $\bar{D}$-separable. So no such set of $D+1$ columns can exist in $M$, and $s_{\text{min}}(M) \geq D+2$.
\end{proof}

Because $s_{\text{min}}(M) \leq d_{\text{min}}(\mathcal{C})$, the following corollary is immediate.

\begin{corollary}
    Let $M$ be a $\bar{D}$-separable matrix. Then the code $\mathcal{C}(M)$ has minimum distance $d_{min}(\mathcal{C}) \geq D+2$.
\end{corollary}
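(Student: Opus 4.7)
The plan is to observe that this corollary follows immediately by chaining Proposition~\ref{prop:stoppingdis2} with the standard inequality $s_{\text{min}}(M) \leq d_{\text{min}}(\mathcal{C})$ discussed in Section~\ref{sec:codingtheory}. No new combinatorial argument is required, and the entire proof should be no more than one or two sentences.

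More concretely, the first step is to invoke Proposition~\ref{prop:stoppingdis2}, which guarantees that any $\bar{D}$-separable parity-check matrix $M$ satisfies $s_{\text{min}}(M) \geq D+2$. The second step is to recall that every codeword $\mathbf{c} \in \mathcal{C}(M)$ has support forming a stopping set in the Tanner graph of $M$: if $H \mathbf{c}^T = \mathbf{0}$, then every row of $H$ restricted to the support of $\mathbf{c}$ has even weight (in particular, not weight one), which is exactly the stopping-set condition. Consequently the minimum-weight codeword yields a stopping set of size $d_{\text{min}}(\mathcal{C})$, so $s_{\text{min}}(M) \leq d_{\text{min}}(\mathcal{C})$. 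Combining the two inequalities gives $d_{\text{min}}(\mathcal{C}) \geq s_{\text{min}}(M) \geq D+2$.

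There is no real obstacle here; the hard work was done in Proposition~\ref{prop:stoppingdis2}, and the point of this corollary is simply to record the minimum-distance consequence for $\bar{D}$-separable matrices as the natural analogue of Proposition~\ref{prop:mindis}. The only thing to watch out for in writing it is to cite Proposition~\ref{prop:stoppingdis2} rather than Proposition~\ref{prop:stoppingdis}, since $\bar{D}$-separability is in general a stronger hypothesis than $(D-1)$-disjunctness, and it is the sharper bound from Proposition~\ref{prop:stoppingdis2} that delivers the $D+2$ lower bound (rather than the $D+1$ one would get by combining Proposition~\ref{prop:stoppingdis} with the $(D-1)$-disjunctness implied by $\bar{D}$-separability).
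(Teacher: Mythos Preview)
Your proposal is correct and matches the paper's approach exactly: the paper states that the corollary is immediate from Proposition~\ref{prop:stoppingdis2} together with the inequality $s_{\text{min}}(M) \leq d_{\text{min}}(\mathcal{C})$ recalled in Section~\ref{sec:codingtheory}. Your additional remark about why one must cite Proposition~\ref{prop:stoppingdis2} rather than Proposition~\ref{prop:stoppingdis} is a helpful clarification.
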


We conclude the section on general results by making a graph-theoretic connection between $D$-disjunct and $D$-separable matrices and the Tanner graphs of the parity-check codes obtained from them.

\begin{remark} 
    Let $G = (V,W;E)$ be a bipartite graph with vertex sets $V$ and $W$ and edge set $E$ and let $M$ be its biadjacency matrix where columns are indexed by $V$ and rows are indexed by $W$. Notice that if $v \in V$ corresponds to a column in $M$, then the set of neighbors of $v$, $\mathcal{N}(v)$, is exactly the set of nonzero rows in column $v$. Let $\{v_i\}_{i=1}^t \subseteq V$ be a collection of $t$ vertices in $V$. Because $\bigcup_{i=1}^t \mathcal{N}(v_i) = \mathcal{N}\left( \{v_i\}_{i=1}^t\right)$, the Boolean sum of the $t$ columns corresponding to $\{v_i\}_{i=1}^t$ is the same as $\mathcal{N}\left(\{v_i\}_{i=1}^t\right)$. From this observation and Definitions~\ref{def:disjunct} and \ref{def:separable}, it is clear that $M$ is
    \begin{itemize}
        \item $D$-disjunct if and only if given any set of $D$ vertices $\{v_i\}_{i=1}^D \subseteq V$, for any $v \in V \setminus \{v_i\}_{i=1}^D$, $\mathcal{N}(v) \not\subseteq \mathcal{N}\left( \{v_i\}_{i=1}^D \right)$
        \item $\bar{D}$-separable if and only if given any two sets of $t \leq D$ vertices $\{v_i\}_{i=1}^t \subseteq V$ and $\{w_i\}_{i=1}^t \subseteq V$ with $\{v_i\}_{i=1}^t \neq \{w_i\}_{i=1}^t$, $\mathcal{N}\left(\{v_i\}_{i=1}^t\right) \neq \mathcal{N}\left(\{w_i\}_{i=1}^t\right)$.
        \item $D$-separable if and only if give any two sets of $D$ vertices $\{v_i\}_{i=1}^D \subseteq V$ and $\{w_i\}_{i=1}^D \subseteq V$ with $\{v_i\}_{i=1}^D \neq \{w_i\}_{i=1}^D$, $\mathcal{N}\left(\{v_i\}_{i=1}^D\right) \neq \mathcal{N}\left(\{w_i\}_{i=1}^D\right)$.
    \end{itemize}
\end{remark}

\section{Specific Constructions of Disjunct Matrices} \label{sec:specific}

In this section, we will examine the parameters of parity-check codes from various constructions of $D$-disjunct matrices. We will look at constructions from three sources: Macula \cite{macula1996simple}, and Fu \& Hwang \cite{fu2006novel}, and Kautz \& Singleton \cite{kautz1964nonrandom}. Although Kautz \& Singleton have multiple constructions given in their paper, we consider using the matrices of superimposed codes built from $q$-ary codes as parity-check codes since that construction is a fundamental tool in the area of group testing (see \cite{barg2017group, du2000combinatorial, mazumdar2016nonadaptive} and references therein). The other constructions in the paper \cite{kautz1964nonrandom}  directly use binary error-correcting codes and codes based on block designs, which are well-studied elsewhere.

For each of these constructions, we show that the Tanner graphs corresponding to the disjunct matrices have girth $6$, which corresponds to a column overlap $a_{\max} = 1$. Hence, by Proposition~\ref{prop:tillich}, and the discussion that follows it, these matrices have the best possible error-correction capabilities after one round of bit-flipping under majority-logic decoding.

\subsection{Macula's Subset Construction} \label{sec:macula}

In \cite{macula1996simple}, Macula gives a construction of matrices that are $D$-disjunct using subset inclusion properties. When used as the parity-check matrix of error-correcting codes for certain cases of $D$ and $K$, the construction results in LDPC codes with girth 6.  We briefly review the construction from \cite{macula1996simple}. 

Let $\left({N\choose j}\right)$ denote the set of $j$-subsets of $[N]$. For $D<K<N$, define the binary matrix $M(D, K, N)$ to be a ${N\choose D} \times {N\choose K}$ matrix whose rows and columns (resp.) are indexed by the elements (sets) of $\left({N\choose D}\right)$ and $\left({N\choose K}\right)$. For a set $S\in \left({N\choose D}\right)$ and $U\in \left({N\choose K}\right)$, the matrix $M(D, K, N)$ has a 1 in the $(S,U)$th position if and only if $S\subseteq U$. The column weight of $M$ is ${K\choose D}$ and the row weight is ${N-D\choose N-K}$. 

Consider the specific case when $K=D+1$. Then the rows of $M$ correspond to $D$-subsets of $[N]$ and the columns correspond to $D+1$-subsets, the column weight is $D+1$ and the row weight is $N-D$. 

\begin{remark} \label{rem:half-n} For $D>\frac{N}{2}$, the matrix $M(D, D+1, N)$ would be a matrix with more rows than columns, since ${N\choose D}>{N\choose D+1}$ for those values. Because parity-check matrices generally have more columns than rows, we restrict ourselves to the case when $D \leq \frac{N}{2}$.
\end{remark}

We begin by considering the girth of the Tanner graphs of the codes from $M(D,D+1,N)$ matrices.

\begin{proposition} \label{prop:macula4cycles}
The Tanner graph of the matrix $M(D, D+1, N)$ is 4-cycle-free.  
\end{proposition}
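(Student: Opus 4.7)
The plan is to show directly that the maximum column intersection $a_{\max}$ of $M(D,D+1,N)$ is at most $1$, which by the discussion at the end of Section~\ref{sec:codingtheory} is equivalent to the Tanner graph having girth at least $6$, and in particular being $4$-cycle-free.

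To do this, I would pick two distinct columns of $M(D,D+1,N)$, indexed by $(D+1)$-subsets $U_1, U_2 \in \binom{[N]}{D+1}$, and count the rows $S \in \binom{[N]}{D}$ in which both columns have a $1$. By the definition of $M(D,D+1,N)$, a row $S$ contributes a common $1$ precisely when $S \subseteq U_1$ and $S \subseteq U_2$, equivalently $S \subseteq U_1 \cap U_2$. Since $U_1 \neq U_2$ and $|U_1|=|U_2|=D+1$, we have $|U_1\cap U_2|\leq D$. The number of $D$-subsets of a set of size at most $D$ is either $0$ (when $|U_1\cap U_2|<D$) or $1$ (when $|U_1\cap U_2|=D$), so the two columns share at most one common $1$. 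This gives $a_{\max}\leq 1$.

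The final step is to translate this back into graph-theoretic language: a $4$-cycle in a bipartite graph requires two variable nodes with two common check-node neighbors, i.e., two columns of the biadjacency matrix with at least two common rows of support. Since we just showed this cannot happen, the Tanner graph contains no $4$-cycle.

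There is no real obstacle here; the only subtle point is the boundary case $|U_1 \cap U_2| = D$, where one must observe that $U_1 \cap U_2$ itself is the unique $D$-subset of the intersection, rather than forgetting to count it or overcounting.
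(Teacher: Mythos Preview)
Your proof is correct and essentially the same as the paper's: both arguments exploit the fact that two distinct $(D+1)$-subsets of $[N]$ can share at most one common $D$-subset. The paper phrases this as a contradiction (if distinct $S_1,S_2$ are both contained in $U_1$ and $U_2$, then $U_1=S_1\cup S_2=U_2$), while you phrase it dually by bounding $|U_1\cap U_2|\le D$ and counting its $D$-subsets to get $a_{\max}\le 1$; the underlying combinatorics is identical.
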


\begin{proof}
    For the sake of contradiction, assume that the Tanner graph of $M(D, D+1, N)$ contains a cycle consisting of check nodes $S_1, S_2$ and variable nodes $U_1, U_2$, where $S_1\neq S_2$ and $U_1\neq U_2$. 
    By the definition of the matrix $M$, this means that $S_i\subseteq U_j$, $i=1,2, j=1, 2$. In order for $S_1$ and $S_2$ to be distinct $D$-subsets of $[N]$, they must differ in at least one element. Thus $|S_1\cup S_2|\geq D+1$. Similarly, $U_1$ and $U_2$ must differ in at least one element. 

    Since $S_1\subseteq U_1$ and $S_2\subseteq U_1$, we have that $S_1\cup S_2\subseteq U_1$. Thus $|S_1\cup S_2|\leq |U_1|$. Therefore  
    \[ D+1\leq |S_1\cup S_2|\leq |U_1|=D+1,\]
which implies that $U_1=S_1\cup S_2$. By an analogous argument, $U_2=S_1\cup S_2$. But then $U_1= U_2$, a contradiction.\end{proof}

\begin{proposition} \label{prop:macula6cycles}
    The Tanner graph of the matrix $M(D, D+1, N)$ contains a 6-cycle. 
\end{proposition}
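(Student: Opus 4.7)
The plan is to exhibit an explicit 6-cycle. A 6-cycle in the Tanner graph of $M(D,D+1,N)$ consists of three distinct check nodes $S_1,S_2,S_3\in\binom{[N]}{D}$ alternating with three distinct variable nodes $U_1,U_2,U_3\in\binom{[N]}{D+1}$, with inclusions $S_1\subseteq U_1$, $S_2\subseteq U_1$, $S_2\subseteq U_2$, $S_3\subseteq U_2$, $S_3\subseteq U_3$, $S_1\subseteq U_3$. Mirroring the argument of Proposition~\ref{prop:macula4cycles}, whenever two distinct $D$-subsets are both contained in the same $(D+1)$-set $U$, that $U$ must equal their union, and the two $D$-sets intersect in exactly $D-1$ elements. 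This tells us what the construction has to look like.

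Since $D<D+1<N$, we have $N\geq D+2$, so we may choose distinct elements $a_1,\dotsc,a_{D-1},x,y,z\in[N]$ (where the $a_i$'s are vacuous if $D=1$). Define
\[
S_1=\{a_1,\dotsc,a_{D-1},x\},\quad S_2=\{a_1,\dotsc,a_{D-1},y\},\quad S_3=\{a_1,\dotsc,a_{D-1},z\},
\]
and let $U_1=S_1\cup S_2$, $U_2=S_2\cup S_3$, $U_3=S_1\cup S_3$. Then each $S_i$ is a $D$-subset, each $U_j$ has exactly $D+1$ elements, the three $S_i$ are pairwise distinct (they differ in the last element), and likewise the three $U_j$ are pairwise distinct (each omits exactly one of $x,y,z$). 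Finally, by construction $S_1\subseteq U_1,U_3$, $S_2\subseteq U_1,U_2$, and $S_3\subseteq U_2,U_3$, so $S_1-U_1-S_2-U_2-S_3-U_3-S_1$ is a 6-cycle in the Tanner graph of $M(D,D+1,N)$.

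There is no genuine obstacle here: the only mild care needed is verifying that $N$ is large enough to fit the $D+2$ required elements, which follows immediately from the standing hypothesis $D<K<N$ with $K=D+1$. I would present the proof essentially as a one-paragraph construction followed by the inclusion check.
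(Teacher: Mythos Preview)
Your proof is correct and takes essentially the same approach as the paper: exhibit an explicit 6-cycle by choosing three $D$-subsets sharing a common $(D-1)$-element core and differing in one element each, with the three $(D+1)$-sets being their pairwise unions. The paper writes out concrete integer labels rather than your abstract $a_1,\dotsc,a_{D-1},x,y,z$, but the construction is identical up to relabeling, and your added checks of distinctness and of $N\geq D+2$ are welcome details the paper leaves implicit.
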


\begin{proof}
    Consider variable nodes indexed by the following $D+1$-sets, for example: 
    \begin{itemize}
        \item $U_1=\{1, 2, \ldots, D+1\}$
        \item $U_2=\{1, 2, \ldots, D, D+2\}$
        \item $U_3=\{2, 3, \ldots, D+2\}$
    \end{itemize}
    and check nodes indexed by the $D$-sets: 
    \begin{itemize}
        \item $S_1=\{1, 2, \ldots, D\}$
        \item $S_2=\{2, 3, \ldots, D, D+2\}$
        \item $S_3=\{2, 3, \ldots, D, D+1\}$.
    \end{itemize}
    Then the cycle $U_1, S_1, U_2, S_2, U_3, S_3, U_1$ is a 6-cycle in the Tanner graph. 
\end{proof}

The following corollary is immediate from Propositions~\ref{prop:macula4cycles} and \ref{prop:macula6cycles}.

\begin{corollary}
    The Tanner graph of the matrix $M(D,D+1,N)$ has girth $6$.
\end{corollary}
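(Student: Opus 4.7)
The plan is to simply combine the two preceding propositions, using the fact that the Tanner graph is bipartite and therefore has even girth. Since the possible even cycle lengths are $4, 6, 8, \dotsc$, ruling out $4$-cycles while exhibiting a $6$-cycle pins the girth down exactly.

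More concretely, I would first invoke Proposition~\ref{prop:macula4cycles}, which tells us that the Tanner graph of $M(D, D+1, N)$ has no cycle of length $4$. Hence the girth is at least $6$. Next, I would invoke Proposition~\ref{prop:macula6cycles}, which produces an explicit $6$-cycle in the Tanner graph, showing that the girth is at most $6$. Combining these two bounds gives girth exactly $6$.

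There is essentially no obstacle here, since the statement is a direct consequence of the two preceding propositions. The only thing worth making explicit is the bipartiteness of the Tanner graph (by construction, check nodes are only adjacent to variable nodes and vice versa), which forces all cycle lengths to be even and hence guarantees that ruling out length $4$ really does push the girth up to at least $6$, rather than $5$.
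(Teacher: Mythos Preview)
Your proposal is correct and matches the paper's approach exactly: the paper states that the corollary ``is immediate from Propositions~\ref{prop:macula4cycles} and \ref{prop:macula6cycles}'' and gives no further argument. Your added remark about bipartiteness forcing even cycle lengths is a helpful clarification that the paper leaves implicit.
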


Due to the highly structured nature of $M(D,D+1,N)$, we can count the number of $6$-cycles in its corresponding Tanner graph.

\begin{proposition}
    The number of $6$-cycles in $M(D, D+1, N)$ is ${N\choose D+2}{D+2\choose 3}$. 
\end{proposition}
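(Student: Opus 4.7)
The plan is to parametrize 6-cycles by the set of three variable nodes they involve, and show that each such triple is determined by (and determines) a single $(D+2)$-subset of $[N]$ together with a choice of $3$ ``missing'' elements.

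Fix a 6-cycle $U_1 - S_1 - U_2 - S_2 - U_3 - S_3 - U_1$, where the $U_i$ are distinct $(D+1)$-subsets of $[N]$ and the $S_i$ are distinct $D$-subsets of $[N]$. The key observation, which I will extract from the argument in the proof of Proposition~\ref{prop:macula4cycles}, is that whenever a $D$-set $S$ is contained in two distinct $(D+1)$-sets $U,U'$, the bound $D+1 \leq |U\cup U'| \leq D+1$ forces $U\cup U' = S\cup \{x,y\}$ with $|S| = D$, so $S = U\cap U'$ is uniquely determined. Applied to $S_1, S_2, S_3$, this shows $S_1 = U_1\cap U_2$, $S_2 = U_2\cap U_3$, $S_3 = U_3\cap U_1$, each of size exactly $D$, so pairwise $|U_i\cap U_j| = D$.

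Next I would prove that all three $U_i$ lie in a common $(D+2)$-subset $T$ of $[N]$. Set $T = U_1\cup U_2$, which has size $D+2$. I want to show $U_3\subseteq T$. From $|U_3\cap U_1| = D$ and $|U_3\cap U_2|=D$, both intersections are $D$-subsets of the $(D+1)$-set $U_3$. If these two intersections coincided, their common value would equal $U_3\cap U_1\cap U_2 \subseteq U_1\cap U_2$, and comparing sizes would force $S_2 = U_2\cap U_3 = U_1\cap U_2 = S_1$, contradicting the distinctness of the check nodes in the cycle. Hence the two $D$-subsets differ, so their union is all of $U_3$, and this union is contained in $U_1\cup U_2 = T$. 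Thus $\{U_1, U_2, U_3\}$ is an unordered triple of $(D+1)$-subsets of some $(D+2)$-set $T$; equivalently, it corresponds to an unordered triple of three distinct elements of $T$ (namely the elements missing from each $U_i$).

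For the counting: there are $\binom{N}{D+2}$ choices of $T$, and for each $T$ there are $\binom{D+2}{3}$ unordered triples $\{U_1, U_2, U_3\}$ of distinct $(D+1)$-subsets of $T$. It remains to check that each such triple arises from exactly one 6-cycle. Given $\{U_1, U_2, U_3\}$ with missing elements $m_1, m_2, m_3$, the three sets $S_{ij} := U_i\cap U_j = T\setminus\{m_i, m_j\}$ are pairwise distinct (since the $m_i$ are distinct), and one checks that $S_{12}\not\subseteq U_3$ (as $m_3\notin\{m_1, m_2\}$), so the induced subgraph on $\{U_1, U_2, U_3\}\cup\{S_{12}, S_{23}, S_{13}\}$ is a single $6$-cycle rather than a more complicated subgraph. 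Conversely, the 6-cycle reconstructed from the triple matches the one we started with because the $S_i$ are forced. The final count is therefore $\binom{N}{D+2}\binom{D+2}{3}$.

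The main obstacle is the step showing $U_3 \subseteq U_1\cup U_2$: it is easy to conflate ``pairwise $D$-intersection'' with ``all three lie in a common $(D+2)$-set,'' but one must genuinely rule out the degenerate situation $U_3\cap U_1 = U_3\cap U_2$, which is where the distinctness of the check nodes in the cycle is used. Everything else is bookkeeping.
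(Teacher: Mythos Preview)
Your argument is correct, modulo one cosmetic slip: the displayed bound should read $D+2 \le |U\cup U'| \le D+2$ (not $D+1$), and then indeed $U\cup U' = S\cup\{x,y\}$ with two new elements, forcing $S = U\cap U'$. Everything downstream is fine.

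Your route is the dual of the paper's. The paper parametrizes a $6$-cycle by its three \emph{check} nodes $S_1,S_2,S_3$ (the $D$-sets), shows any two of them differ in exactly one element, takes the ambient $(D{+}2)$-set to be $S_1\cup S_2\cup S_3$, and then selects the three distinguishing elements. You parametrize instead by the three \emph{variable} nodes $U_1,U_2,U_3$ (the $(D{+}1)$-sets), prove they are three of the $(D{+}1)$-subsets of a common $(D{+}2)$-set $T$, and then select the three missing elements. Both descriptions land on the same data $(T,\{\text{three elements of }T\})$ and the same count. What your version buys is a cleaner verification that the correspondence is a bijection: you explicitly check that the induced subgraph on the six vertices is a single $6$-cycle (via $S_{12}\not\subseteq U_3$) and that the $S_i$ are forced by the $U_j$. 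The paper's write-up leaves implicit the exclusion of the degenerate configuration in which the three ``distinguishing'' elements of the $S_i$ are not all distinct (which would make $|S_1\cup S_2\cup S_3|=D+1$); that case is ruled out precisely because it would force two of the $U_j$ to coincide, the same obstruction you invoke when showing $U_3\cap U_1 \neq U_3\cap U_2$.
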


\begin{proof}
   Suppose $S_1$, $S_2$, and $S_3$ are $D$-subsets that are part of a $6$-cycle in the Tanner graph of $M(D, D+1, N)$. To have a $6$-cycle, we need $D+1$-sets $U_1$, $U_2$, and $U_3$ such that 
    \begin{multicols}{3}
        \begin{itemize}
            \item $S_1 \subseteq U_1$
            \item $S_1 \subseteq U_3$
            \item $S_2 \subseteq U_2$
            \item $S_2 \subseteq U_3$
            \item $S_3 \subseteq U_1$
            \item $S_3 \subseteq U_2$
        \end{itemize}
    \end{multicols}
    If $S_1 \subseteq U_3$ and $S_2 \subseteq U_3$, then $S_1 \cup S_2 \subseteq U_3$, so $|S_1 \cup S_2 | \leq D+1$. Because $S_1$ and $S_2$ are both $D$-sets and are not equal, this means that $S_1$ and $S_2$ differ in exactly one item. A similar argument holds for $S_1$ and $S_3$ as well as for $S_2$ and $S_3$. Hence the collection of sets $\{S_i\}_{i=1}^3$ must differ pairwise in exactly one item.
    
    To see that the number of $6$-cycles in the Tanner graph of $M(D, D+1, N)$ equals ${N\choose D+2}{D+2\choose 3}$, consider the number of sets $S_1, S_2, S_3, U_1, U_2, U_3$ described above. 
    The number of sets of the form $S_1\cup S_2\cup S_3$ is  ${N\choose D+2}$. Three of the elements from that $(D+2)$-subset are chosen to be the elements that distinguish $S_1, S_2, S_3$.  There are ${D+2\choose 3}$ possibilities for these special elements. Their selection, along with the selection of the $D-1$ elements in $S_1\cap S_2\cap S_3$, determines the 6-cycle described above. Thus the total number of ways to construct such a $6$-cycle is 
    \[ {N\choose D+2}{D+2\choose 3}.\]

\end{proof}

\begin{remark}
    For $K\geq D+2$, the Tanner graph of $M(D, K, N)$ contains 4-cycles because the nodes indexed by sets 
    \begin{itemize}
        \item $S_1=\{1, 2, \ldots, D\}$, 
        \item $S_2=\{1, 2, \ldots, D-1, D+1\}$, 
        \item $U_1= \{1, 2, \ldots, D, D+1, D+2\}$,
        \item $U_2=\{1, 2, \ldots, D, D+1,D+3\}$,
        \end{itemize}
        form a 4-cycle $U_1, S_1, U_2, S_2, U_1$. 
\end{remark}

We next explore the rate and distance parameters of these codes.

\begin{proposition} 
    Let $\mathcal{C}$ be the code with parity check matrix $M(D,D+1,N)$. Assuming that $D \leq \frac{N}{2}$, the rate $R$ of $\mathcal{C}$ is $R \geq 1 - \frac{D+1}{N-D}$.
\end{proposition}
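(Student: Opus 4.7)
The plan is to apply directly the rate bound from Section~\ref{sec:codingtheory}: for any $m \times n$ parity-check matrix $H$, the code $\mathcal{C}(H)$ satisfies $R \geq 1 - m/n$ (with equality if $H$ has full row rank). Since the construction of Macula gives $M(D,D+1,N)$ with row index set $\binom{[N]}{D}$ and column index set $\binom{[N]}{D+1}$, the matrix has dimensions $\binom{N}{D} \times \binom{N}{D+1}$, and this bound is immediate in principle.

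The only real task is to simplify the ratio of the two binomial coefficients. First I would write
\[
1 - \frac{\binom{N}{D}}{\binom{N}{D+1}} = 1 - \frac{N!\,(D+1)!\,(N-D-1)!}{N!\,D!\,(N-D)!} = 1 - \frac{D+1}{N-D},
\]
which is the claimed lower bound. The assumption $D \leq N/2$ plays the role flagged in Remark~\ref{rem:half-n}: it guarantees $\binom{N}{D} \leq \binom{N}{D+1}$, so that the fraction $\frac{D+1}{N-D}$ is at most $1$ and the lower bound on the rate is nonnegative (otherwise the statement would be vacuous).

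There is no real obstacle here; the proposition is essentially a one-line application of the bound $R \geq 1 - m/n$ combined with a routine simplification of $\binom{N}{D}/\binom{N}{D+1}$. The only subtlety worth explicitly noting in the write-up is that we are stating a lower bound rather than the exact rate because the row rank of $M(D,D+1,N)$ over $\mathbb{F}_2$ is not addressed, so some rows may be linearly dependent; this is consistent with the convention set in Section~\ref{sec:codingtheory}.
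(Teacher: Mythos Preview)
Your proposal is correct and follows essentially the same argument as the paper: recognize that $M(D,D+1,N)$ has dimensions $\binom{N}{D}\times\binom{N}{D+1}$, apply the bound $R \geq 1 - m/n$ (noting the possibility of redundant rows), and simplify the binomial ratio to $\frac{D+1}{N-D}$. Your additional remarks on the role of the hypothesis $D \leq N/2$ and on why only a lower bound is claimed are accurate and slightly more explicit than the paper's version, but the underlying reasoning is the same.
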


\begin{proof}
    By construction, $M(D,D+1,N)$ is an $\binom{N}{D} \times \binom{N}{D+1}$ matrix. Because this matrix could have redundant rows, the rate of the matrix is lower bounded by $1 - \frac{\binom{N}{D}}{\binom{N}{D+1}}$, which simplifies to $1 - \frac{D+1}{N-D}$.
\end{proof}

\begin{theorem} \label{thm:mindis}
Let $\mathcal{C}$ be the code with parity check matrix $M(D,D+1,N)$. The minimum distance of $\mathcal{C}$ is $d_{\text{min}}(\mathcal{C}) = D+2$.
\end{theorem}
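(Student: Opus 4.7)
The plan is to use Proposition~\ref{prop:mindis} for the lower bound and then exhibit an explicit codeword of weight $D+2$ to achieve equality. Since the disjunctness of $M(D,D+1,N)$ is already established (it is a classical result of Macula), Proposition~\ref{prop:mindis} immediately gives $d_{\text{min}}(\mathcal{C}) \geq D+2$, so the entire content of the argument is constructing a dependence among exactly $D+2$ columns.

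For the construction, I would fix any $(D+2)$-subset $T \subseteq [N]$ and take the $D+2$ columns indexed by the $(D+1)$-subsets of $T$ (equivalently, the sets $T \setminus \{t\}$ for $t \in T$). The key claim is that, over $\mathbb{F}_2$, these columns sum to the zero vector. To verify this, fix a row index $S$ (a $D$-subset of $[N]$) and count the number of columns in the chosen collection that have a $1$ in row $S$. By the definition of $M(D,D+1,N)$, the column indexed by $U$ has a $1$ in row $S$ precisely when $S \subseteq U$. If $S \not\subseteq T$ then no such $U \subseteq T$ exists, so the row sum is $0$. If $S \subseteq T$, then the $(D+1)$-subsets of $T$ containing $S$ are exactly the sets $S \cup \{t\}$ with $t \in T \setminus S$, and $|T \setminus S| = 2$. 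So in every row the number of ones among the chosen columns is either $0$ or $2$, hence the $\mathbb{F}_2$-sum is $\mathbf{0}$.

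This produces a nonzero element of $\ker M(D,D+1,N)$ with Hamming weight exactly $D+2$, so $d_{\text{min}}(\mathcal{C}) \leq D+2$. Combined with the lower bound from Proposition~\ref{prop:mindis}, this gives $d_{\text{min}}(\mathcal{C}) = D+2$.

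There is no real obstacle here: the lower bound is automatic from the general disjunct-matrix machinery, and the hardest part is simply noticing that rows of $M(D,D+1,N)$ restricted to a block of $(D+1)$-subsets sitting inside a common $(D+2)$-set have uniform even parity. The condition $D \leq N/2$ from Remark~\ref{rem:half-n} ensures $N \geq D+2$ so such a $T$ exists.
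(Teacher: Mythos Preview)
Your proof is correct and uses the same underlying construction as the paper: the paper's set $\mathcal{U}$ is precisely the collection of all $(D+1)$-subsets of the specific $(D+2)$-set $T=\{N-D-1,\dots,N\}$, so your choice of columns agrees with theirs up to relabeling. Your parity argument (each $D$-subset $S\subseteq T$ lies in exactly $|T\setminus S|=2$ of the chosen $(D+1)$-subsets) is a much cleaner verification than the paper's case analysis by smallest element, but the idea is identical.
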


\begin{proof}
    We will exhibit a set of $D+2$ linearly dependent columns of $M = M(D,D+1,N)$ to show that $d_{\text{min}}(\mathcal{C}) \leq D+2$.

    \begin{itemize}
        \item Let $\mathcal{S}^{(N-D+1)}$ be the collection of $D$-subsets of $[N]$ whose smallest element is $N-D+1$. Note that $|\mathcal{S}^{(N-D+1)}| = 1$. We will call this singular element $S^{(N-D+1)}_0$.
        \item Let $\mathcal{S}^{(N-D)}$ be the collection of $D$-subsets of $[N]$ whose smallest element is $N-D$. Note that $|\mathcal{S}^{(N-D)}| = D$. Let $S^{(N-D)}_i$ be the element of $\mathcal{S}^{(N-D)}$ that is missing element $i$.
        \item Let $\mathcal{S}^{(N-D-1)}$ be the collection of $D$-subsets of $[N]$ whose smallest element is $N-D-1$. Note that $|\mathcal{S}^{(N-D-1)}| = \binom{D+1}{2}$. Let $S^{(N-D)}_{i,j}$ be the element of $\mathcal{S}^{(N-D-1)}$ that is missing elements $i$ and $j$.
        \item Let $\mathcal{U}^{(N-D)}$ be the collection of $(D+1)$-subsets of $[N]$ whose smallest element is $N-D$. Note that $|\mathcal{U}^{(N-D)}| = 1$. We will call this singular element $U^{(N-D)}_0$.
        \item Let $\mathcal{U}^{(N-D-1)}$ be the collection of $(D+1)$-subsets of $[N]$ whose smallest element is $N-D-1$. Note that $|\mathcal{U}^{(N-D-1)}| = D+1$. Let $U^{(N-D-1)}_i$ be the element of $\mathcal{U}^{(N-D-1)}$ that is missing element $i$.
        \item Let $\mathcal{S} = \bigcup_{i=N-D-1}^{N-D+1} \mathcal{S}^{(i)}$.
        \item Let $\mathcal{U} = \bigcup_{i=N-D-1}^{N-D} \mathcal{U}^{(i)}$.
    \end{itemize}

    We will show that each set $S \in \mathcal{S}$ is a subset of exactly two sets in $\mathcal{U}$.

    \begin{itemize}
        \item Consider $S^{(N-D+1)}_0 \in \mathcal{S}^{(N-D+1)}$. Clearly, $S^{(N-D+1)}_0 \subsetneq U^{(N-D)}_0$. Additionally, $S^{(N-D+1)}_0 \subsetneq U^{(N-D-1)}_{N-D}$.
        \item Consider $S^{(N-D)}_i \in \mathcal{S}^{(N-D)}$. Clearly, $S^{(N-D)}_i \subsetneq U^{(N-D)}_0$. Additionally, $S^{(N-D)}_i \subsetneq U^{(N-D-1)}_i$.
        \item Consider $S^{(N-D-1)}_{i,j} \in \mathcal{S}^{(N-D-1)}$. Then $S^{(N-D-1)}_{i,j} \subsetneq U^{(N-D-1)}_i$ and $S^{(N-D-1)}_{i,j} \subsetneq U^{(N-D-1)}_j$.
    \end{itemize}

    Because $|\mathcal{S}| = 1 + D + \binom{D+1}{2} = 1 + D + \frac{D}{2}(D+1)$ and each of these sets was contained in two sets in $\mathcal{U}$, we have accounted for
    \begin{align*}
        2(1+D+\frac{D}{2}(D+1)) &= 2 + 2D + D(D+1) = D^2 + 3D + 2
    \end{align*}
    ones. Recall that the column weight of any column of $M$ is $D+1$. So there are $(D+2)(D+1) = D^2 + 3D + 2$ ones in these columns of $M$. Hence we have accounted for all nonzero entries in the columns corresponding to $\mathcal{U}$.

    Because each row indexed by $\mathcal{S}$ has exactly two ones in the columns indexed by $\mathcal{U}$ and this accounts for all ones in these columns, all columns indexed by $\mathcal{U}$ have row sums equivalent to $0$ mod $2$. This shows that the collection of $D+2$ columns adds up to $\mathbf{0}$ mod $2$, which gives us that $d_{\text{min}}(\mathcal{C}) \leq D+2$.
    
    By Proposition~\ref{prop:mindis}, $d_{\text{min}}(\mathcal{C}) \geq D+2$. So $d_{\text{min}}(\mathcal{C}) = D+2$.
\end{proof}

The following result on stopping distance is relatively immediate.

\begin{corollary}
    Let $\mathcal{C}$ be the code with parity check matrix $M = M(D,D+1,N)$. The minimum stopping distance of $M$ is $s_{\text{min}}(M) = D+2$.
\end{corollary}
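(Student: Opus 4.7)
The plan is to sandwich $s_{\text{min}}(M)$ between two quantities that both equal $D+2$. For the lower bound, I would invoke Proposition~\ref{prop:stoppingdis}: since $M(D,D+1,N)$ is $D$-disjunct by construction (this is Macula's original result used throughout this subsection), that proposition immediately gives $s_{\text{min}}(M) \geq D+2$.

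For the upper bound, I would use the standard inequality $s_{\text{min}}(H) \leq d_{\text{min}}(\mathcal{C}(H))$, which is noted explicitly in the preliminaries right after the definition of stopping distance: any codeword's support is itself a stopping set. Combined with Theorem~\ref{thm:mindis}, which establishes $d_{\text{min}}(\mathcal{C}) = D+2$, this gives $s_{\text{min}}(M) \leq D+2$.

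Chaining the two bounds yields $D+2 \leq s_{\text{min}}(M) \leq d_{\text{min}}(\mathcal{C}) = D+2$, so equality holds throughout. There is no real obstacle here; the corollary is a two-line consequence of results already established, which is exactly why the authors flag it as ``relatively immediate.'' The only thing to be careful about is citing Proposition~\ref{prop:stoppingdis} rather than re-deriving the lower bound from scratch, and recording that the explicit set of $D+2$ dependent columns exhibited in the proof of Theorem~\ref{thm:mindis} also serves as a witness stopping set of size $D+2$ (this is automatic since its support is a codeword).
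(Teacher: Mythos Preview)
Your proposal is correct and matches the paper's proof essentially line for line: the paper also cites Theorem~\ref{thm:mindis} for $d_{\text{min}}(\mathcal{C}) = D+2$, invokes Proposition~\ref{prop:stoppingdis} for $s_{\text{min}}(M) \geq D+2$, and concludes via $s_{\text{min}}(M) \leq d_{\text{min}}(\mathcal{C})$.
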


\begin{proof}
    By Theorem~\ref{thm:mindis}, $d_{\text{min}}(\mathcal{C}) = D+2$. By Proposition~\ref{prop:stoppingdis}, $s_{\text{min}}(M) \geq D+2$. Because $s_{\text{min}}(M) \leq d_{\text{min}}(\mathcal{C})$, we have $s_{\text{min}}(M) = D+2$.
\end{proof}

We conclude this subsection with a discussion on the density of the matrices $M(D, D+1, N)$. The density of ones in $M(D, D+1, N)$ is $\frac{D+1}{{N\choose D}}$, since there are $D+1$ ones per column, and each column is length ${N\choose D}$. A code family with fewer than 10 ones per row can be classified as LDPC \cite{misoczki2013mdpc}, so suppose $N-D\leq 10$, so $N\leq D+10$. Coupled with the observation about $D\leq N/2$ in Remark~\ref{rem:half-n}, these matrices $M(D, D+1, N)$ are low-density  for values of $N\leq 20$. More generally, the codes from $M(D, D+1, N)$ with $N>20$ would be MDPC. 

Letting $D$ be fixed, as $N \to \infty$, the number of columns of $M$ is $n = \binom{N}{D+1} = O(N^{D+1})$, and so the row weight $N-D$ of $M$ is $O(n^{1/(D+1)})$. In order to be strictly considered MDPC, these codes would need $D = 1$. However, any $D$ value gives a density of at most $O(\sqrt{n})$, and so Proposition~\ref{prop:tillich} holds for all codes constructed from $M(D,D+1,N)$. In other words, all codes in this family can correct at least $\left\lfloor \frac{N-D}{2} \right\rfloor$ errors under a single round of majority-logic bit-flipping decoding.

\subsection{$t$-Packing Subset Construction} \label{sec:tpacking}

In \cite{fu2006novel}, Fu and Hwang propose a new construction heavily inspired by the one given in \cite{macula1996simple} that makes use of $t$-packings. A $t$-packing is an ordered pair $(V,\mathcal{B})$ where $|V| = v$ and $\mathcal{B}$ is a collection of $k$-subsets (blocks) of $V$ such that every $t$-subset of $V$ occurs in at most one block of $\mathcal{B}$. In the specific case where each $t$-subset occurs in exactly one block of $\mathcal{B}$, $(V,\mathcal{B})$ is a Steiner $t$-design and is denoted $S(t,k,v)$.

Let $P(t,k,v)$ denote a $t$-packing with $v$ elements and blocks of size $k$. In \cite{fu2006novel}, the authors construct a $D$-disjunct matrix using $P(t,k,v)$. Their construction can be summarized as follows: for any $r$ with $1 \leq r \leq t-1$, let $M_r$ be a $\binom{v}{r} \times n$ binary matrix where the columns correspond to an arbitrary set of $n$ blocks from $P(t,k,v)$, the rows correspond to all the $\binom{v}{r}$ $r$-subsets, and entry $(i,j)$ is $1$ if and only if the $r$-subset in row $i$ is a subset of the block corresponding to column $j$. Otherwise, entry $(i,j)$ is zero.\footnote{Notice that when you take $t=k$, $r=t-1$, and use every block, the matrix $M_r$ coincides with the matrix $M(t-1, t, v)$ from Subsection~\ref{sec:macula}. In the current subsection, we will consider the separate case when $k=t+1$. }

\begin{theorem} [\cite{fu2006novel}]
    $M_r$ is a $D$-disjunct matrix with $D = \left\lceil \binom{k}{r} / \binom{t-1}{r} \right\rceil - 1$ for $1 \leq r \leq t-1$. Further, $D$ is maximized when $r = t-1$.
\end{theorem}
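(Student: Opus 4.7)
The plan is to use the defining property of a $t$-packing to bound the intersection of any two distinct blocks, and then apply a counting/union-bound argument on the $r$-subsets contained in a designated block. The key observation is that if $B_0$ and $B_i$ are distinct blocks of $P(t,k,v)$, then $|B_0 \cap B_i| \leq t-1$: otherwise $B_0 \cap B_i$ would contain a $t$-subset lying in two different blocks, violating the $t$-packing condition.

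I would fix $D+1$ distinct columns $B_0, B_1, \ldots, B_D$ of $M_r$ with $B_0$ designated, and ask when there is a row (an $r$-subset $R \subseteq [v]$) with $R \subseteq B_0$ but $R \not\subseteq B_i$ for every $i \geq 1$. The total number of $r$-subsets of $B_0$ is $\binom{k}{r}$. For each $i \geq 1$, the number of $r$-subsets of $B_0$ that also lie in $B_i$ equals $\binom{|B_0 \cap B_i|}{r} \leq \binom{t-1}{r}$ by the intersection bound above. Taking a union bound over $i = 1, \ldots, D$, the number of $r$-subsets of $B_0$ contained in at least one other $B_i$ is at most $D\binom{t-1}{r}$. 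Thus a "good" $r$-subset exists whenever $\binom{k}{r} > D\binom{t-1}{r}$, i.e.\ $D < \binom{k}{r}/\binom{t-1}{r}$. The largest integer $D$ satisfying this is $\lceil \binom{k}{r}/\binom{t-1}{r}\rceil - 1$ (checking both the integer and non-integer cases of the quotient), which gives the claimed $D$-disjunctness.

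For the maximization claim, I would compare consecutive values of $f(r) := \binom{k}{r}/\binom{t-1}{r}$ by computing
\[
\frac{f(r+1)}{f(r)} \;=\; \frac{\binom{k}{r+1}\binom{t-1}{r}}{\binom{k}{r}\binom{t-1}{r+1}} \;=\; \frac{k-r}{t-1-r}
\]
for $r+1 \leq t-1$. Since a $t$-packing has $k \geq t > t-1$, we get $k-r > t-1-r > 0$, so this ratio exceeds $1$. Hence $f(r)$ is strictly increasing on $\{1,\ldots,t-1\}$, and so is its ceiling, so $D$ is maximized at $r = t-1$, where it takes the explicit value $\binom{k}{t-1} - 1$.

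I expect no serious obstacle: the one subtle point is being careful that the counting argument really gives the floor of $\binom{k}{r}/\binom{t-1}{r}$ (equivalently $\lceil \binom{k}{r}/\binom{t-1}{r}\rceil - 1$) in both the integer and non-integer cases of the quotient, and that the monotonicity argument uses $k \geq t$, which is implicit in the definition of $P(t,k,v)$. Everything else is a direct application of the packing property and an elementary binomial-coefficient manipulation.
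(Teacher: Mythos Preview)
The paper does not prove this theorem; it is quoted from Fu and Hwang \cite{fu2006novel} and used as a black box. Your argument is correct and is essentially the standard proof: it is exactly the pigeonhole/union-bound reasoning behind Lemma~\ref{lemma-disjunct} (Kautz--Singleton), specialized to this matrix with $w_{\min}=\binom{k}{r}$ and $a_{\max}\le\binom{t-1}{r}$, together with the identity $\lceil a/b\rceil-1=\lfloor(a-1)/b\rfloor$ for positive integers $a,b$.

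One small wording issue: from ``$f(r)$ is strictly increasing'' you only get that $\lceil f(r)\rceil$ is \emph{weakly} increasing, not strictly, so the conclusion should be phrased as ``$D$ attains its maximum at $r=t-1$'' rather than ``$D$ is strictly increasing.'' That is all the theorem claims, so the argument goes through. Your hypothesis $k\ge t$ is indeed implicit in the non-degenerate use of a $t$-packing and is satisfied in every case the paper considers (in particular $k=t+1$).
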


We will limit our discussion to the case where $r = t-1$ and $k = t+1$, and so $D = \binom{t+1}{t-1} - 1 = \frac{t(t+1)}{2} - 1$. These choices of $r$ and $k$ not only maximize the possible disjunct value, but also assure that the Tanner graph corresponding to $M_r$ has girth $6$, which we will show in Proposition~\ref{prop:tpackingcycles}.

\begin{proposition} \label{prop-m-r-4-cycle}
    Let $M_{t-1}$ be a matrix created as described above and based on some $P(t,t+1,v)$. Then the Tanner graph corresponding to $M_{t-1}$ is free of $4$-cycles, and so has girth at least $6$.
\end{proposition}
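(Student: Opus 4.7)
The plan is to argue by contradiction in the same style as Proposition~\ref{prop:macula4cycles}, but using the defining property of the $t$-packing to close off the extra case that arises. Suppose that the Tanner graph of $M_{t-1}$ contains a $4$-cycle. Then there exist two distinct blocks $U_1, U_2$ (columns, each a $(t+1)$-subset of $V$) and two distinct $(t-1)$-subsets $S_1, S_2$ (rows) with $S_i \subseteq U_j$ for $i,j \in \{1,2\}$. From $S_1, S_2 \subseteq U_1 \cap U_2$ I will deduce $S_1 \cup S_2 \subseteq U_1 \cap U_2$.

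Next I would bound $|S_1 \cup S_2|$. Since $S_1$ and $S_2$ are distinct $(t-1)$-subsets, they must differ in at least one element, giving $|S_1 \cup S_2| \geq t$; and since $S_1 \cup S_2 \subseteq U_1$ with $|U_1| = t+1$, we also have $|S_1 \cup S_2| \leq t+1$. So only two cases arise.

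In the case $|S_1 \cup S_2| = t+1$, containment in both $U_1$ and $U_2$ forces $S_1 \cup S_2 = U_1 = U_2$, contradicting $U_1 \neq U_2$. In the case $|S_1 \cup S_2| = t$, the union is itself a $t$-subset of $V$ contained in both blocks $U_1$ and $U_2$; this contradicts the $t$-packing property that every $t$-subset lies in at most one block of $\mathcal{B}$. Either way we reach a contradiction, so no $4$-cycle exists, and since the Tanner graph is bipartite its girth is therefore at least $6$.

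This proof does not have a real main obstacle, since the argument mirrors the Macula case; the only subtlety to highlight is that when $k = t+1$ the union $S_1 \cup S_2$ could a priori have size $t$ rather than $t+1$ (unlike in Proposition~\ref{prop:macula4cycles}, where the analog of $t$ is the column-index parameter), and it is precisely the $t$-packing hypothesis that eliminates this extra case.
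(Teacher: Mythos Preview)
Your proof is correct and follows essentially the same approach as the paper's. The only cosmetic difference is that the paper first notes that any two distinct blocks of a $t$-packing satisfy $|U_1 \cap U_2| \leq t-1$ and then derives an immediate contradiction from $|S_1 \cup S_2| \geq t$, whereas you split into the two cases $|S_1 \cup S_2| = t$ and $|S_1 \cup S_2| = t+1$; these are logically equivalent organizations of the same idea.
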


\begin{proof}
    Let $S_1$ and $S_2$ (with $S_1 \neq S_2$) correspond to arbitrary rows of $M_{t-1}$. Note that $|S_1| = |S_2| = t-1$ and, because $S_1 \neq S_2$, $|S_1 \cup S_2| \geq t$.

    Let $U_1$ and $U_2$ (with $U_1 \neq U_2$) correspond to arbitrary columns of $M_{t-1}$. Because both $U_1$ and $U_2$ are blocks from a $t$-packing, they intersect in at most $t-1$ entries, so $|U_1 \cap U_2| \leq t-1$. 

    If the Tanner graph corresponding to $M_{t-1}$ has a $4$-cycle, then some $S_1$ and $S_2$ are both contained in some $U_1$ and $U_2$. Because $|S_1 \cup S_2| \geq t$, this would mean $|U_1 \cap U_2| \geq t$, a contradiction to $U_1$ and $U_2$ being blocks from a $t$-packing. Hence no $4$-cycles exist, and so the Tanner graph has girth at least $6$.
\end{proof}

\begin{remark}
    Consider an arbitrary $t$-packing $P(t,k,v) = (V,\mathcal{B})$. If $k = t$, then all elements of $\mathcal{B}$ are disjoint. In this case, any $r$-subset where $1 \leq r \leq t-1$ will be contained in at most one block in $\mathcal{B}$. While a Tanner graph corresponding to such a matrix would have no cycles, it would also have row weights at most $1$, which yields a code of minimum distance $0$ or $1$.
\end{remark}

In general, given an arbitrary $t$-packing, one cannot say more about the girth of the corresponding Tanner graph. However, if we restrict to Steiner $t$-designs, $S(t,t+1,v)$ and assume we take all blocks instead of an arbitrary subset of blocks, we can show that $M_{t-1}$ has girth $6$.

But first, we examine some additional properties of $M_{t-1}$ that we can say if it is based on an $S(t,t+1,v) = (V,\mathcal{B})$ and uses all blocks. We require some combinatorial design theory results, taken from \cite{colbourn2010crc}.

\begin{theorem} \cite{colbourn2010crc} \label{thm:colbourn}
    Let $(V,\mathcal{B})$ be an $S(t,k,v)$. Then:
    \begin{enumerate}
        \item $|\mathcal{B}| = \binom{v}{t} / \binom{k}{t}$.
        \item Each $s$-subset of $V$ is contained in $\binom{v-s}{t-s}/\binom{k-s}{t-s}$ blocks.
    \end{enumerate}
\end{theorem}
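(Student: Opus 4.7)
The plan is a standard double-counting argument; both parts follow from counting incidences between blocks and distinguished subsets of $V$.

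For part (1), I would count pairs $(T,B)$ with $T \in \binom{V}{t}$, $B \in \mathcal{B}$, and $T \subseteq B$. Counting first by $T$: the Steiner-system axiom says each $t$-subset of $V$ is contained in exactly one block, so the number of pairs equals $\binom{v}{t}$. Counting by $B$: each block has size $k$ and so contains $\binom{k}{t}$ distinct $t$-subsets. Equating the two counts yields $|\mathcal{B}|\binom{k}{t} = \binom{v}{t}$, which rearranges to the claim.

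For part (2), fix an $s$-subset $S \subseteq V$ with $s \leq t$ and let $\lambda_s(S)$ denote the number of blocks of $\mathcal{B}$ that contain $S$. I would count pairs $(T,B)$ with $S \subseteq T \subseteq B$, $|T|=t$, and $B \in \mathcal{B}$. Counting by $T$: the number of $t$-subsets containing $S$ is $\binom{v-s}{t-s}$, and each such $T$ sits in exactly one block, giving $\binom{v-s}{t-s}$ pairs in total. Counting by $B$: only the $\lambda_s(S)$ blocks containing $S$ contribute, and each such block $B$ (of size $k$) admits $\binom{k-s}{t-s}$ choices of $T$ with $S \subseteq T \subseteq B$. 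Equating gives $\lambda_s(S)\binom{k-s}{t-s} = \binom{v-s}{t-s}$, whence $\lambda_s(S) = \binom{v-s}{t-s}/\binom{k-s}{t-s}$. In particular, the answer depends only on $s$, not on the specific $s$-subset $S$, so the formula is well-defined for every $s$-subset.

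There is no real obstacle here beyond setting up the correct bipartite incidence structure; the slight subtlety to flag is that one must check $\lambda_s(S)$ really is an integer for a Steiner system to exist — but since the theorem assumes an $S(t,k,v)$ already exists, divisibility is automatic. Note also that part (1) is the special case $s=0$ of part (2), so in a polished write-up one could simply prove part (2) and derive part (1) as a corollary.
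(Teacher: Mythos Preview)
Your proof is correct and is the standard double-counting argument for these classical design-theory identities. However, the paper does not actually prove this theorem: it is stated with a citation to \cite{colbourn2010crc} and used as a black box in the proof of Proposition~\ref{prop:mrparams}. So there is no in-paper proof to compare against; your write-up would serve perfectly well as the omitted justification.
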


\begin{proposition} \label{prop:mrparams}
    Let $M_{t-1}$ be created as described, but taking all columns instead of a subset, and based on some $S(t,t+1,v)$. Then the following hold:
    \begin{enumerate}
        \item $M_{t-1}$ is a $\binom{v}{t-1} \times \binom{v}{t} / (t+1)$ matrix.
        \item $M_{t-1}$ has row weight $\frac{v-t+1}{2}$.
        \item $M_{t-1}$ has column weight $\binom{t+1}{t-1}$.
    \end{enumerate}
\end{proposition}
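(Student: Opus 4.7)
The plan is to derive each of the three parameters as a direct combinatorial count, leaning on Theorem~\ref{thm:colbourn} for the two facts about Steiner systems that we will need. There is no real obstacle here: the construction of $M_{t-1}$ is phrased so literally in terms of subsets and blocks that each claim reduces to a single application of a standard counting identity or a single invocation of Theorem~\ref{thm:colbourn}.

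For (1), I would simply read off the dimensions from the definition of $M_{t-1}$. The rows of $M_{t-1}$ are indexed by all $(t-1)$-subsets of $V$, of which there are $\binom{v}{t-1}$. The columns are indexed by all blocks of the Steiner system, and by part (1) of Theorem~\ref{thm:colbourn} with $k = t+1$, the number of blocks is $\binom{v}{t}/\binom{t+1}{t} = \binom{v}{t}/(t+1)$.

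For (3), I would observe that each column corresponds to a block $U \in \mathcal{B}$ of size $t+1$, and the column has a $1$ in row $S$ if and only if $S \subseteq U$. Since $|U|=t+1$ and $|S|=t-1$, the weight of the column is the number of $(t-1)$-subsets of $U$, which is $\binom{t+1}{t-1}$; this is independent of which block was chosen, so the column weight is constant.

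For (2), the analogous count gives the row weight as the number of blocks containing a fixed $(t-1)$-subset $S$. Applying part (2) of Theorem~\ref{thm:colbourn} with $s=t-1$ and $k=t+1$ gives this count as $\binom{v-(t-1)}{t-(t-1)}/\binom{(t+1)-(t-1)}{t-(t-1)} = \binom{v-t+1}{1}/\binom{2}{1} = (v-t+1)/2$, again independent of the chosen $S$. The only thing worth flagging is an implicit parity condition: for this to be an integer we need $v \equiv t-1 \pmod{2}$, which is a known necessary condition for the existence of an $S(t,t+1,v)$ and may be silently assumed via the hypothesis that the design exists.
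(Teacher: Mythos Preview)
Your proof is correct and follows essentially the same approach as the paper: both invoke Theorem~\ref{thm:colbourn}(1) for the number of columns, Theorem~\ref{thm:colbourn}(2) with $s=t-1$ for the row weight, and a direct count of $(t-1)$-subsets of a $(t+1)$-set for the column weight. Your added remark about the parity condition $v\equiv t-1\pmod 2$ is a nice observation that the paper omits.
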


\begin{proof}
    From Theorem~\ref{thm:colbourn}, $|\mathcal{B}| = \binom{v}{t} / \binom{k}{t} = \binom{v}{t}/\binom{t+1}{t} = \binom{v}{t} / (t+1)$, and so $M_{t-1}$ has $\binom{v}{t}/(t+1)$ columns and is hence a $\binom{v}{t-1} \times \binom{v}{t}/(t+1)$ matrix, proving (1).

    Also from Theorem~\ref{thm:colbourn}, each $(t-1)$-subset of $V$ is contained in $\binom{v-(t-1)}{t-(t-1)}/\binom{(t+1)-(t-1)}{t-(t-1)} = \frac{v-t+1}{2}$ blocks, and so $M_{t-1}$ has row weight $\frac{v-t+1}{2}$, proving (2).

    Finally, because each column set is size $t+1$ and each row set is size $t-1$, each column set contains $\binom{t+1}{t-1}$ subsets of size $(t-1)$, and so has column weight $\binom{t+1}{t-1}$, proving (3).
\end{proof}

\begin{proposition} \label{prop:tpackingcycles}
    Let $M_{t-1}$ be created as described, but taking all columns instead of a subset, and based on some $S(t,t+1,v) = (V,\mathcal{B})$. Then:
    \begin{enumerate}
        \item the Tanner graph corresponding to $M_{t-1}$ has girth $6$.
        \item every triple of distinct $(t-1)$-subsets of $V$ that have exactly one item different (i.e. $|S_i \cap S_j \cap S_k| = t-2$) corresponds to a $6$-cycle in the Tanner graph corresponding to $M_{t-1}$.
        \item the Tanner graph corresponding to $M_{t-1}$ has exactly ${v\choose t+1}{t+1\choose 3}$ $6$-cycles.
    \end{enumerate}
\end{proposition}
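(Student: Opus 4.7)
The plan is to mirror the arguments used for the Macula matrices $M(D, D+1, N)$ in Subsection~\ref{sec:macula}, replacing the observation that every $(D+1)$-subset of $[N]$ is a column with the defining property of $S(t, t+1, v)$: every $t$-subset of $V$ lies in a unique block of $\mathcal{B}$. Part~(1) then reduces to Part~(2), since Proposition~\ref{prop-m-r-4-cycle} already gives girth at least $6$, so exhibiting any single $6$-cycle produced by Part~(2) forces the girth to equal $6$.

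For Part~(2), I would take three distinct $(t-1)$-subsets $S_1, S_2, S_3$ of $V$ with $|S_1 \cap S_2 \cap S_3| = t-2$, set $T = S_1 \cap S_2 \cap S_3$, and write $S_i = T \cup \{a_i\}$ for distinct $a_1, a_2, a_3 \in V \setminus T$. For each pair $i \neq j$, the union $S_i \cup S_j = T \cup \{a_i, a_j\}$ is a $t$-subset of $V$, which by the Steiner property is contained in a unique block $U_k \in \mathcal{B}$, where $k$ is the remaining index in $\{1,2,3\}$. The containments $S_i \subseteq U_j$ and $S_i \subseteq U_k$ for $\{i,j,k\} = \{1,2,3\}$ then yield the closed walk $S_1, U_3, S_2, U_1, S_3, U_2, S_1$, which is a $6$-cycle once $U_1, U_2, U_3$ are shown to be pairwise distinct.

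For Part~(3), I would set up a bijection between $6$-cycles and pairs $(X, Y)$ where $X$ is a $(t+1)$-subset of $V$ and $Y \subseteq X$ has $|Y| = 3$. In the forward direction, $X = T \cup Y$ with $Y = \{a_1, a_2, a_3\}$ determines $S_i = T \cup \{a_i\}$ and, via Part~(2), a $6$-cycle. In the reverse direction, a $6$-cycle uniquely recovers its three check nodes $S_1, S_2, S_3$, from which $T$ and $Y$ are read off as the common intersection and the set of distinguishing elements. Counting the pairs $(X, Y)$ then gives the claimed $\binom{v}{t+1}\binom{t+1}{3}$.

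The main obstacle is the distinctness of $U_1, U_2, U_3$ in Part~(2). If the $(t+1)$-set $X = T \cup \{a_1, a_2, a_3\}$ is itself a block of $\mathcal{B}$, then the Steiner property forces all three $U_k$ to coincide with $X$ and no genuine $6$-cycle arises from the triple. Making Part~(2) and the bijection in Part~(3) rigorously consistent will require handling this degenerate configuration carefully: since the degenerate triples are in bijection with pairs $(X, Y)$ where $X \in \mathcal{B}$, one either verifies that such configurations do not occur (for the Steiner systems under consideration), shows that they contribute to the count in an alternate way, or refines the statement to account for them uniformly.
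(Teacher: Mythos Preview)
Your plan mirrors the paper's: invoke Proposition~\ref{prop-m-r-4-cycle} for girth $\geq 6$, exhibit a $6$-cycle from a triple $S_i = T\cup\{a_i\}$ via the Steiner property, and count such triples as pairs $(X,Y)$ with $|X|=t+1$ and $Y\in\binom{X}{3}$. The paper's proof of (2) in fact argues the converse direction---that the check nodes of any $6$-cycle pairwise differ in one element---rather than the implication stated, but the two directions together are exactly what the bijection in (3) requires.

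The obstacle you flag is genuine, and the paper's proof glosses over it as well. When $X=T\cup\{a_1,a_2,a_3\}\in\mathcal{B}$, uniqueness forces $U_1=U_2=U_3=X$ and no $6$-cycle arises; since every $S(t,t+1,v)$ with $v>t+1$ has both block and non-block $(t+1)$-subsets, statement (2) fails for those triples and the count in (3) is off. Concretely, for $t=2$ the matrix $M_1$ is the point--line incidence matrix, and for the Fano plane $S(2,3,7)$ the Tanner graph is the Heawood graph, which has exactly $28$ six-cycles rather than $\binom{7}{3}\binom{3}{3}=35$; the defect $7=|\mathcal{B}|$ is precisely your degenerate count. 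For $t\geq 3$ there is a second wrinkle you have not noted: in any $6$-cycle one has $S_k=U_i\cap U_j$ and hence $S_i\cap S_j=U_1\cap U_2\cap U_3$ for every pair, but the bound $|S_1\cup S_2|\leq t+1$ only forces $|S_1\cap S_2\cap S_3|\in\{t-2,t-3\}$. The paper's step ``$S_1$ and $S_2$ differ in exactly one item'' silently discards the $t-3$ case, yet such $6$-cycles exist---for instance in $S(3,4,8)$, any three weight-$4$ codewords of the extended Hamming code summing to zero give blocks with empty triple intersection. So of your three proposed resolutions, it is the last (refining the statement) that is actually needed: subtract $|\mathcal{B}|\binom{t+1}{3}$ for the degenerate triples and, for $t\geq 3$, add the $|T|=t-3$ contributions.
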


\begin{proof}
    Let $S(t,t+1,v) = (V,\mathcal{B})$ be a Steiner $t$-design. Consider the following $(t-1)$ subsets of $V$ (corresponding to rows of $M_{t-1}$) and $(t+1)$-subsets of $V$ (corresponding to elements of $\mathcal{B}$):
    \begin{multicols}{2}
    \begin{itemize}
        \item $S_1 = \{x_1, \dotsc, x_{t-1}\}$
        \item $S_2 = \{x_1, \dotsc, x_{t-2}, x_t\}$
        \item $S_3 = \{x_1, \dotsc, x_{t-2}, x_{t+1}\}$
        \item $U_1 = \{x_1, \dotsc, x_{t-1}, x_{t+1}, y_1\}$
        \item $U_2 = \{x_1, \dotsc, x_{t-2}, x_t, x_{t+1}, y_2\}$
        \item $U_3 = \{x_1, \dotsc, x_{t-1}, x_t, y_3\}$
    \end{itemize}
    \end{multicols}
    \noindent where the $x_i$ are arbitrary elements of $V$ and the $y_i$ are determined by the specific structure of $(V,\mathcal{B})$. We know that each of the $U_i$ exist, because $(V,\mathcal{B})$ is such that each $t$-subset of $V$ is contained in exactly one block. So $U_1$ is the unique block of $(V,\mathcal{B})$ containing elements $x_1, \dotsc, x_{t-1}, x_{t+1}$, $U_2$ is the unique block containing elements $x_1, \dotsc, x_{t-2}, x_t, x_{t+1}$, and $U_3$ is the unique block containing elements $x_1, \dotsc, x_{t-1}, x_t$. Notice that:
    \begin{multicols}{3}
    \begin{itemize}
        \item $S_1 \subseteq U_1$
        \item $S_2 \subseteq U_2$
        \item $S_3 \subseteq U_1$
        \item $S_1 \subseteq U_3$
        \item $S_2 \subseteq U_3$
        \item $S_3 \subseteq U_2$.
    \end{itemize}
    \end{multicols}
    Consequently, $S_1, U_1, S_3, U_2, S_2, U_3, S_1$ is a $6$-cycle in the Tanner graph corresponding to $M_r$. Proposition~\ref{prop-m-r-4-cycle} shows that the girth is at least 6. So this Tanner graph has girth $6$, proving (1). 

    We will now show that every $6$-cycle in the Tanner graph corresponding to $M_{t-1}$ must come from a triple of distinct $(t-1)$-subsets of $V$ that have exactly one item different. A $6$-cycle in this Tanner graph must involve exactly three $(t-1)$-subsets of $V$. Suppose $S_1$, $S_2$, and $S_3$ are $(t-1)$-subsets that are part of a $6$-cycle. To have a $6$-cycle, we need some blocks $U_1$, $U_2$, and $U_3$ such that 
    \begin{multicols}{3}
        \begin{itemize}
            \item $S_1 \subseteq U_1$
            \item $S_1 \subseteq U_3$
            \item $S_2 \subseteq U_2$
            \item $S_2 \subseteq U_3$
            \item $S_3 \subseteq U_1$
            \item $S_3 \subseteq U_2$
        \end{itemize}
    \end{multicols}
    If $S_1 \subseteq U_3$ and $S_2 \subseteq U_3$, then $S_1 \cup S_2 \subseteq U_3$, so $|S_1 \cup S_2 | \leq t+1$. Because $S_1$ and $S_2$ are both $(t-1)$-sets and are not equal, this means that $S_1$ and $S_2$ differ in exactly one item. A similar argument holds for $S_1$ and $S_3$ as well as for $S_2$ and $S_3$. Hence the collection $\{S_i\}_{i=1}^3$ must differ in exactly one item. This proves (2).

    To see that the number of $6$-cycles in the Tanner graph of $M_{t-1}$ equals ${v\choose t+1}{t+1\choose 3}$, consider the number of sets described in part (2) of the Proposition. The number of sets of the form $S_1\cup S_2\cup S_3$ is  ${v\choose t+1}$. Three of the elements from that $(t+1)$-subset are chosen to be the elements that distinguish $S_1, S_2, S_3$.  There are ${t+1\choose 3}$ possibilities for these special elements. Their selection, along with the selection of the $t-2$ elements in $S_1\cap S_2\cap S_3$, determines the 6-cycle described in part (2). Thus the total number of ways to construct such a $6$-cycle is 
    \[ {v\choose t+1}{t+1\choose 3}.\]
\end{proof}

\begin{remark}
    If $1 \leq r \leq t-2$, the Tanner graph corresponding to $M_r$ using all columns from an $S(t,k,v)$ will have $4$-cycles for any choice of $k \geq t+1$. We will show that this is true for $r=t-2$ and then argue that this is sufficient for all $1 \leq r \leq t-2$. Assume that $V = [v]$ and consider the following $(t-2)$-subsets of $V$: 
    \begin{multicols}{2} 
        \begin{itemize}
            \item $S_1 = \{1, 2, \dotsc, t-2\}$
            \item $S_2 = \{1, 2, \dotsc, t-3, t-1\}$
        \end{itemize}
    \end{multicols}
    \noindent
    and the following blocks of $S(t,k,v)$:
    \begin{multicols}{2}
        \begin{itemize}
            \item $B_1 = \{1, 2, \dotsc, t, x_{t+1}, \dotsc, x_k\}$
            \item $B_2 = \{1, 2, \dotsc, t-1, x_{t+1}, y_1, \dotsc, y_k\}$
        \end{itemize}
    \end{multicols}
    Because $S_1 \subsetneq B_1$, $S_1 \subsetneq B_2$, $S_2 \subsetneq B_1$, and $S_2 \subsetneq B_2$, $S_1, B_1, S_2, B_2, S_1$ is a $4$-cycle in the Tanner graph corresponding to $M_r$. 

    Notice that if $1 \leq r < t-2$, there exist $r$-subsets of $S_1$ and $S_2$ fulfilling the same conditions as above. Hence $4$-cycles exist in Tanner graphs corresponding to $M_r$ using all columns from an $S(t,k,v)$ for all $1 \leq r \leq t-2$. 
\end{remark}

We will now examine what we can say about the rate and distance parameters of the class of parity-check codes taken from an $M_{t-1}$ based on an $S(t,t+1,v)$.

\begin{proposition}
    Let $\mathcal{C}$ be the code with parity check matrix $M_{t-1}$ constructed using all blocks of $S(t,t+1,v)$. The rate $R$ of $\mathcal{C}$ is bounded by $R \geq 1 - \frac{t(t+1)}{v-t+1}$.
\end{proposition}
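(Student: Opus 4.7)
The plan is to use the standard lower bound on the rate of a code in terms of the dimensions of a (possibly row-redundant) parity-check matrix, namely $R \geq 1 - \frac{m}{n}$ where $H$ is $m \times n$, and then to plug in the parameters of $M_{t-1}$ established in Proposition~\ref{prop:mrparams}.

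First, I invoke Proposition~\ref{prop:mrparams}(1), which gives that $M_{t-1}$ is a $\binom{v}{t-1} \times \frac{\binom{v}{t}}{t+1}$ matrix. Since $M_{t-1}$ may have redundant rows, the rate of $\mathcal{C}$ is bounded below by
\[
R \geq 1 - \frac{\binom{v}{t-1}}{\binom{v}{t}/(t+1)} = 1 - \frac{(t+1)\binom{v}{t-1}}{\binom{v}{t}}.
\]

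Second, I simplify the ratio of binomial coefficients by elementary cancellation:
\[
\frac{\binom{v}{t-1}}{\binom{v}{t}} = \frac{t!\,(v-t)!}{(t-1)!\,(v-t+1)!} = \frac{t}{v-t+1}.
\]
Substituting this back yields $R \geq 1 - \frac{t(t+1)}{v-t+1}$, as desired.

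There is no genuine obstacle here — once Proposition~\ref{prop:mrparams} provides the row and column counts, the argument is a two-line computation. The only mild subtlety worth flagging in the write-up is the reason we can only state a lower bound rather than an equality: we do not know the exact rank of $M_{t-1}$, so the actual rate could be strictly larger if rows of $M_{t-1}$ are linearly dependent.
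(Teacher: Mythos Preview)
Your proof is correct and follows essentially the same approach as the paper: invoke Proposition~\ref{prop:mrparams} for the dimensions of $M_{t-1}$, apply the standard rate bound $R \geq 1 - m/n$, and simplify the binomial ratio. The paper's version is slightly terser but the argument is identical.
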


\begin{proof}
    According to Proposition~\ref{prop:mrparams}, $M_{t-1}$ is an $\binom{v}{t-1} \times \binom{v}{t}/(t+1)$ matrix. Because this matrix could have redundant rows, the rate of the code corresponding to $M_{t-1}$ is lower bounded by $1 - \frac{\binom{v}{t-1}}{\binom{v}{t}/(t+1)}$, which simplifies to $1 - \frac{t(t+1)}{v-t+1}$.
\end{proof}

By Proposition~\ref{prop:mindis}, $\mathcal{C}(M_{t-1})$ based on $S(t,t+1,v)$ has minimum distance $d_{\text{min}}(\mathcal{C}(M_{t-1})) \geq D + 2 = \frac{t(t+1)}{2}+1$. In general, not all designs will meet this bound, as noted in Example~\ref{exa:tpacking}.

\begin{example} \label{exa:tpacking}
    In the case of an $S(3,4,v)$, the lower bound on the minimum distance of $\mathcal{C}(M_2)$ is $\frac{3(3+1)}{2}+1 = 7$.
    \begin{itemize}
        \item If we build $M_2$ from the unique $S(3,4,10)$, $\mathcal{C}(M_2)$ is a $[30,2,11]$ code. In particular, this code does not meet the lower bound on the minimum possible distance of this construction.
        \item If we build $M_2$ from one\footnote{The $S(3,4,14)$ programmed natively into SageMath.} of the four nonisomorphic $S(3,4,14)$'s, $\mathcal{C}(M_2)$ is a $[91,7,7]$ code, showing that the minimum distance bound is tight in at least some situations.
    \end{itemize}
\end{example}

Note that given some $S(t,t+1,v) = (V,\mathcal{B})$, a codeword in this case corresponds to a subset of blocks $\mathcal{S} \subseteq \mathcal{B}$ such that each $(t-1)$-subset of the blocks in $\mathcal{S}$ is contained in an even number of the blocks in $\mathcal{S}$.

If we assume that such a configuration comes from the case where each subset of the blocks in $\mathcal{S}$ is contained in exactly $2$ blocks of $\mathcal{S}$, we can count that we need $1$ initial line $\ell_0$, and an additional $\binom{t+1}{2}$ lines (one for each $(t-1)$-subset of $\ell_0$). This is exactly the lower bound on the minimum distance. If any $(t-1)$-subset was contained in more than $2$ blocks, this would require more than $\binom{t+1}{2} + 1$ lines, and so the minimum size configuration that leads to a codeword must come from the case where each subset of the blocks in $\mathcal{S}$ is contained in \textit{exactly} two blocks of $\mathcal{S}$. This discussion is summarized in the following theorem.

\begin{theorem}
    Construct $M_{t-1}$ as described using all blocks of an $S(t,t+1,v) = (V,\mathcal{B})$. Then $d_{\text{min}}(\mathcal{C}(M_{t-1})) = \binom{t+1}{2} + 1$ if and only if there is some line configuration $\mathcal{S}$ with $\binom{t+1}{2} + 1$ lines so that $\mathcal{S} = \{B_i\}_{i=1}^{\binom{t+1}{2}+1} \subseteq \mathcal{B}$ such that for all $i$ and $j$ with $i \neq j$, $|B_i \cap B_j| = t-1$.
\end{theorem}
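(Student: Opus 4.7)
The plan is to run a double-counting argument using the statistic $n_T := |\{B \in \mathcal{S} : T \subseteq B\}|$ for each $(t-1)$-subset $T \subseteq V$. Write $N := \binom{t+1}{2} + 1$. Two observations drive both directions. First, $\sum_T n_T = |\mathcal{S}| \cdot \binom{t+1}{2}$, by counting ones of $M_{t-1}$ restricted to the columns in $\mathcal{S}$. Second, $\sum_T \binom{n_T}{2} \leq \binom{|\mathcal{S}|}{2}$, where the left side counts pairs of blocks in $\mathcal{S}$ sharing a common $(t-1)$-subset: the Steiner property forces $|B \cap B'| \leq t-1$ for distinct $B, B' \in \mathcal{B}$ (otherwise they share a $t$-subset and coincide), so any block pair contributes at most one common $(t-1)$-subset, with equality precisely when $|B \cap B'| = t-1$.

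For $(\Rightarrow)$, I would take a codeword $\mathcal{S}$ of weight $N$. The codeword condition makes every $n_T$ even. With $|\mathcal{S}| = N$, combining the two observations yields $\sum_T n_T(n_T - 2) \leq 0$; since $n_T(n_T-2) \geq 0$ for every even $n_T \geq 0$, each term vanishes, so $n_T \in \{0, 2\}$ for every $T$ and equality holds in the intersection inequality. The equality case is exactly the pairwise-intersection condition in the theorem's conclusion.

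For $(\Leftarrow)$, given $\mathcal{S}$ satisfying the pairwise condition, Proposition~\ref{prop:mindis} gives $d_{\min} \geq N$, so it suffices to show $\mathcal{S}$ is itself a codeword of weight $N$. Equality in the second observation combined with the first gives $\sum_T n_T(n_T - 2) = 0$. To eliminate odd $n_T$, I would fix an arbitrary $B_0 \in \mathcal{S}$ and exploit the local identity $\sum_{T \subseteq B_0} n_T = \binom{t+1}{2} + (N-1) = 2(N-1)$, obtained by splitting the sum into the contribution of $B_0$ itself and the $N-1$ other blocks, each of which meets $B_0$ in exactly $t-1$ elements and hence contributes to exactly one $T \subseteq B_0$. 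A case analysis at $B_0$ using the Steiner constraint that blocks through a common $T$ must pairwise intersect in $T$ alone then forces $n_T \in \{0, 2\}$, so $\mathcal{S}$ is a codeword and $d_{\min} = N$.

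The main obstacle will be the reverse direction: the global identity $\sum_T n_T(n_T - 2) = 0$ is consistent with mixtures of $n_T = 1$ and $n_T \geq 3$, so the Steiner $t$-design structure must be invoked locally (not merely the pairwise-intersection hypothesis) to rule out those configurations. In line with the sketch preceding the theorem, the argument pins each multiplicity to $\{0, 2\}$ because expanding a common $(t-1)$-subset into more than two blocks of $\mathcal{S}$ would force extra lines beyond the budget of $N$.
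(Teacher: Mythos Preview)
Your forward direction is correct and sharper than the paper's treatment, which is only the informal paragraph preceding the theorem: the identities $\sum_T n_T = N(N-1)$ and $\sum_T \binom{n_T}{2} \le \binom{N}{2}$ combine to give $\sum_T n_T(n_T-2)\le 0$, and evenness of each $n_T$ forces every term to vanish, yielding $n_T\in\{0,2\}$ and hence the pairwise-intersection condition via the equality case.

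The reverse direction has a genuine gap that cannot be closed along the lines you sketch. You aim to show that \emph{any} $\mathcal{S}$ of $N$ blocks with $|B_i\cap B_j|=t-1$ for all $i\neq j$ is a codeword, but this is false already for $t=2$: in any Steiner triple system with $v\ge 9$, four triples through a common point $p$ pairwise meet exactly in $\{p\}$, yet $n_{\{p\}}=4$ and $n_{\{x\}}=1$ for the other eight points involved. Your local identity $\sum_{T\subseteq B_0} n_T = 2(N-1)$ reads $4+1+1=6$ here and holds; the Steiner constraint you invoke (distinct blocks through a common $T$ meet only in $T$) also holds. Neither tool can force $n_T\in\{0,2\}$, because the desired conclusion is simply false under the stated hypothesis. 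This is in fact a defect of the theorem statement rather than of your strategy: the unique $S(2,3,9)$ (the affine plane $\mathrm{AG}(2,3)$) is anti-Pasch, so its code has $d_{\min}>4$, yet it contains the pencil configuration above and thus satisfies the right-hand side of the biconditional. The paper's informal argument tacitly assumes each $(t-1)$-subset is covered exactly $0$ or $2$ times by the blocks of $\mathcal{S}$, which is strictly stronger than the pairwise-intersection condition; under that extra hypothesis the reverse implication is immediate.
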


When $t = 2$, this configuration is exactly the well-studied Pasch configuration. In some sense, this collection of configurations could be considered generalizations of the Pasch configuration. Below are these configurations for $t=2$ and $t=3$.

\begin{table}[h]
    \centering
\begin{tabular}{|c|l|}
\hline
    $t$ & Blocks \\
    \hline
    $2$ & $012, 034, 145, 235$ \\
    $3$ & $0123, 0156, 0246, 0345, 1245, 1346, 2356$ \\
    \hline    
\end{tabular}
    \caption{\label{tab:blocks} Generalized Pasch Configurations in an $S(t,t+1,v)$ for $t=2$ and $t=3$.}
\end{table}

Infinite families of $S(t,t+1,v)$ exist for $t=2$ and $t=3$. A $S(2,3,v)$ exists if and only if $v \equiv 1 \pmod 6$ or $v \equiv 3 \pmod 6$. A $S(3,4,v)$ exists if and only if $v \equiv 2 \pmod 6$ or $v \equiv 4 \pmod 6$. Only finitely many $S(t,t+1,v)$ are known for $t \geq 4$, and none are known for $t \geq 6$ \cite{colbourn2010crc}. Note that when $t = 2$, the row sets are reduced to $1$-subsets of $v$, and so constructing $M_{t-1}$ from an $S(2,3,v)$ is equivalent to using the more well-studied incidence matrix. For information on parity-check codes that use incidence matrices of balanced incomplete block designs as their parity-check matrices, a good reference is \cite{johnson2003low}.

We conclude this subsection with a discussion on the density of these matrices. Let $M_r$ be constructed from a $t$-packing $P(t,t+1,v)$ with $r = t-1$. Each column has weight $\binom{t+1}{t-1}$ and $M_{t-1}$ has $\binom{v}{t-1}$ rows. So the density of $M_{t-1}$ is $\binom{t+1}{t-1}/\binom{v}{t-1} = \frac{(t+1)! (v-t+1)!}{2 v!}$. If we assume that the row weight of $M_{t-1}$ must be less than $10$ to be considered LDPC \cite{misoczki2013mdpc}, we need $\frac{v-t+1}{2} \leq 10$, which simplifies to the condition $v-t \leq 19$.

Letting $t$ be fixed, as $v \to \infty$, the number of columns of $M_{t-1}$ is $n = \binom{v}{t}/(t+1) = O(v^{t-1})$, and so the row weight $\frac{v-t+1}{2}$ is $O(n^{1/(t-1)})$. In order to be strictly considered MDPC, these codes need $t = 3$. When $t = 1$ or $t = 2$, these codes are too dense to be MDPC. However, any $t \geq 3$ gives a row weight of at most $O(\sqrt{n})$, and so Proposition~\ref{prop:tillich} holds for all codes constructed from $M_{t-1}$ with $t \geq 3$. In other words, all such codes with $t \geq 3$ can correct at least $\left\lfloor \frac{v-t+1}{4} \right\rfloor$ errors under a single round of majority-logic bit-flipping decoding.

\subsection{Disjunct Matrices from $q$-ary Codes} \label{sec:qary}

\label{sec:q-ary}

Kautz and Singleton \cite{kautz1964nonrandom} present a method of constructing binary superimposed codes from $q$-ary codes. This construction has since been used extensively for group testing \cite{barg2017group, hong2022group, inan2020strongly, mazumdar2012almost, mazumdar2016nonadaptive}, and was shown to be optimal for probabilistic group testing \cite{inan2019optimality}.
For clarity of notation, we will use $\mathcal{D}$ to denote the starting $[n,k,d]_q$ $q$-ary code.  The Kautz-Singleton construction is essentially a concatenation, where the resulting binary matrix has as its columns binary-expanded versions of each codeword in $\mathcal{D}$. Another way to interpret this construction in the context of the current work is as a method for obtaining a binary parity-check matrix from a $q$-ary code. 

The main example of the Kautz-Singleton construction uses a Reed-Solomon codes as the component code, $\mathcal{D}$. When the component code has dimension 2, the resulting binary matrix, viewed as a parity-check matrix, is $4$-cycle free and   moderate-density.

We now describe the main construction. Let $\mathcal{D}$ be an $[n, k, d]$ code over $\mathbb{F}_q$. Order the elements of $\mathbb{F}_q$: $\alpha_1, \alpha_2, \ldots, \alpha_q$. For $i=1, \ldots, q$, let $\textbf{e}_i$ denote the column vector with a 1 in position $i$ and zeroes in every other position. We form a binary matrix $M(\mathcal{D})$ by listing all codewords in $\mathcal{D}$ as column vectors, then replacing each symbol $\alpha_i$ with the column vector $\mathbf{e_i}$, for each element in the matrix and each $i=1, \ldots, q$.  

\begin{example} 
We start with a $[3, 2, 2]$ Reed-Solomon code over $\mathbb{F}_4=\{0, 1, \alpha, \alpha^2\}$ with generator matrix 
\[ G = \begin{pmatrix}
    1 & 1 & 1 \\ 
    0 & 1 & \alpha 
\end{pmatrix}\]
This code contains 16 codewords. Using the ordering $\alpha_1=0, \alpha_2=1, \alpha_3=\alpha, \alpha_4=\alpha^2$, we obtain the matrix $M(\mathcal{D})$. The codewords $\mathbf{c}_i$ correspond to columns in $M(\mathcal{D})$ as follows. Order the 16 polynomials $a+bx$, where $a,b\in \mathbb{F}_4$ as 
\[ 0, 1, \alpha, \alpha^2, x, x+1, x+\alpha, x+\alpha^2, \alpha x, \ldots, \alpha^2x+\alpha, \alpha^2x+\alpha^2.\]

For example, column 1 of the matrix is obtained from the evaluation of the polynomial $p_1(x)=0$ at the field elements $0, 1, \alpha$. Each codeword is then expanded to a binary string of length 12 in the matrix $M(\mathcal{D})$, via the process: $0$ is mapped to the binary column vector $\mathbf{e}_1$, $1$ to $\mathbf{e}_2$, $\alpha$ to $\mathbf{e}_3$, and $\alpha^2$ to $\mathbf{e}_4$. 
\end{example}

\begin{example} \label{ex:RS}
\setcounter{MaxMatrixCols}{20}

The matrix $M$ is $2$-disjunct.
\[ M=
\begin{pmatrix}                        
 1 & 0 & 0     & 0      & 1   & 0     & 0       & 0         & 1       & 0         & 0           & 0             & 1         & 0           & 0             & 0               \\ 
 0 & 1 & 0     & 0      & 0   & 1     & 0       & 0         & 0       & 1         & 0           & 0             & 0         & 1           & 0             & 0               \\ 
 0 & 0 & 1     & 0      & 0   & 0     & 1       & 0         & 0       & 0         & 1           & 0             & 0         & 0           & 1             & 0               \\ 
 0 & 0 & 0     & 1      & 0   & 0     & 0       & 1         & 0       & 0         & 0           & 1             & 0         & 0           & 0             & 1               \\  
 1 & 0 & 0     & 0      & 0   & 1     & 0       & 0         & 0       & 0         & 1           & 0             & 0         & 0           & 0             & 1               \\ 
 0 & 1 & 0     & 0      & 1   & 0     & 0       & 0         & 0       & 0         & 0           & 1             & 0         & 0           & 1             & 0               \\ 
 0 & 0 & 1     & 0      & 0   & 0     & 0       & 1         & 1       & 0         & 0           & 0             & 0         & 1           & 0             & 0               \\ 
 0 & 0 & 0     & 1      & 0   & 0     & 1       & 0         & 0       & 1         & 0           & 0             & 1         & 0           & 0             & 0               \\ 
 1 & 0 & 0     & 0      & 0   & 0     & 1       & 0         & 0       & 0         & 0           & 1             & 0         & 1           & 0             & 0               \\ 
 0 & 1 & 0     & 0      & 0   & 0     & 0       & 1         & 0       & 0         & 1           & 0             & 1         & 0           & 0             & 0               \\ 
 0 & 0 & 1     & 0      & 1   & 0     & 0       & 0         & 0       & 1         & 0           & 0             & 0         & 0           & 0             & 1               \\ 
 0 & 0 & 0     & 1      & 0   & 1     & 0       & 0         & 1       & 0         & 0           & 0             & 0         & 0           & 1             & 0               \\ 
    \end{pmatrix}
\]

When viewing $M$ as the parity-check matrix of a binary code $\mathcal{C}(M)$, the code has parameters $[16, 4, 8]$. Notice that by Proposition~\ref{prop:mindis} $d_{\min}(\mathcal{C}(M))\geq 4$, and in this example that bound is not met. 

\end{example}

Recall from Lemma~\ref{lemma-disjunct}, the disjunct value of a binary matrix satisfies $D=\left\lfloor \frac{w_{\min}-1}{a_{\max}}\right\rfloor$. 

For each $q$-ary code of length $n$, the matrix $M(\mathcal{D})$ constructed above has $n$ ones per column. Thus $w_{\min}=n$. Moreover, for any code of minimum distance $d$, there exist a pair of codewords that are exactly distance $d$ apart---these codewords share an overlap of symbols in $n-d$ positions. Therefore for any $[n,k,d]$ $q$-ary code, the matrix $M(\mathcal{D})$ is $D=\lfloor\frac{n-1}{n-d}\rfloor$-disjunct. The denominator is minimized and $D$ is maximized when the starting code is MDS, i.e., when $n-d=k-1$. 

We can detect $4$-cycles in the matrix $M(\mathcal{D})$ by using the parameter $a_{\max}$ defined in Lemma~\ref{lemma-disjunct}. Notice that the Tanner graph of the matrix $M(\mathcal{D})$ contains a 4-cycle if and only if $a_{\max}\geq 2$. Thus, to obtain $M(\mathcal{D})$ without $4$-cycles, we must start with a code where $n-d=a_{\max}=1$. For any MDS code being used in this context, this is equivalent to the dimension of the code equaling 2. Moreover, when $a_{\max}=0$ the codewords share no common entries, so the minimum distance of the code $\mathcal{D}$ is $d=n$, which means that $k=1$.  In this case, the matrix $M(\mathcal{D})$ is an $nq\times q$ matrix with $n$ block $q\times q$ permutation matrices, and thus the code $\mathcal{C}(M(\mathcal{D}))$ is the trivial code.

Relating this to the Kautz-Singleton construction for Reed-Solomon codes in particular, the dimension of the code is the bound on the polynomial degree for the evaluation code. Codewords correspond to polynomials of degree up to $k-1$, thus when $k>2$ that includes quadratic polynomials. Since we are guaranteed to encounter pairs of distinct quadratics that intersect in two of the evaluation points, we will obtain a $4$-cycle in the Tanner graph in all cases when $2\leq n\leq q$ and $k>2$. When $k\leq 2$, the polynomials have degree at most 1, and thus distinct polynomials can intersect at either one point or zero points.
Therefore we have the following result. 

\begin{proposition}
    The Tanner graph corresponding to $M(\mathcal{D})$ where $\mathcal{D}$ is an $[n, k\leq 2, d]_q$ Reed-Solomon code and $M(\mathcal{D})$ is the binary matrix obtained as above, is free of 4-cycles, and so has girth at least 6. 
\end{proposition}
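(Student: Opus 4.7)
The plan is to translate the 4-cycle condition on the Tanner graph into a statement about the number of positions in which two codewords of $\mathcal{D}$ can agree, then use the structure of low-dimensional Reed-Solomon codes to rule out excessive agreement.

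First I would recall that a 4-cycle in the Tanner graph of $M(\mathcal{D})$ corresponds to two distinct columns of $M(\mathcal{D})$ sharing ones in two distinct rows, i.e., $a_{\max}(M(\mathcal{D})) \geq 2$. Conversely, absence of 4-cycles is equivalent to $a_{\max}(M(\mathcal{D})) \leq 1$. So the goal reduces to bounding $a_{\max}(M(\mathcal{D})) \leq 1$ when $\mathcal{D}$ has dimension at most $2$.

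Next I would unpack how the construction translates codeword agreement into column overlap. Each column of $M(\mathcal{D})$ is the concatenation of the $n$ length-$q$ standard basis vectors $\mathbf{e}_{i_1}, \ldots, \mathbf{e}_{i_n}$ encoding the symbols of one codeword $\mathbf{c} \in \mathcal{D}$. Two columns corresponding to codewords $\mathbf{c}$ and $\mathbf{c}'$ share a 1 in a given row exactly when their symbols agree in the corresponding coordinate. Thus the overlap between the two columns equals the number of coordinates where $\mathbf{c}$ and $\mathbf{c}'$ coincide, which is $n - \mathrm{wt}_H(\mathbf{c} - \mathbf{c}')$. For distinct codewords this is at most $n - d$.

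I would then specialize to the Reed-Solomon setting. When $\mathcal{D}$ is an $[n,k,d]_q$ Reed-Solomon code, codewords are evaluations of polynomials of degree less than $k$ at $n$ distinct points of $\mathbb{F}_q$. If $k \leq 2$, any two distinct such polynomials have degree at most $1$, and by the fundamental theorem of algebra their difference is a nonzero polynomial of degree at most $1$, hence has at most one root in $\mathbb{F}_q$. Therefore two distinct codewords of $\mathcal{D}$ agree in at most one coordinate, giving $a_{\max}(M(\mathcal{D})) \leq 1$.

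Combining these observations yields that the Tanner graph of $M(\mathcal{D})$ has no 4-cycles, and since every cycle in a bipartite graph has even length, the girth is at least $6$. The main (and essentially only) step to spell out carefully is the equivalence between column overlap of $M(\mathcal{D})$ and coordinate agreement of codewords; the rest is bookkeeping and the elementary polynomial fact above.
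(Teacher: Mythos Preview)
Your proposal is correct and follows essentially the same approach as the paper: both arguments reduce the absence of 4-cycles to the bound $a_{\max}(M(\mathcal{D}))\leq 1$, identify column overlap with the number of coordinate agreements between codewords, and then use that distinct polynomials of degree at most $1$ over $\mathbb{F}_q$ agree in at most one evaluation point. (A minor quibble: the relevant fact is simply that a nonzero polynomial of degree at most $1$ has at most one root over any field, not the fundamental theorem of algebra, which concerns $\mathbb{C}$.)
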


Use the notation $\mathcal{C}$ for a binary code with parity-check matrix $H=M(\mathcal{D})$, where $\mathcal{D}$ is a $q$-ary $[n, k, d]$ code used to construct $M(\mathcal{D})$ as in the Kautz-Singleton construction. Then $\mathcal{C}$ has parameters $[q^k, \geq q^k-nq, d_{\min}(\mathcal{C})]$. 

\begin{proposition} \label{prop:ks-rate}
    The rate of the binary code $\mathcal{C}$ with parity-check matrix $M(\mathcal{D})$ from the Kautz-Singleton construction is 
    \[ R\geq 1-\frac{n}{q^{k-1}}.\]
\end{proposition}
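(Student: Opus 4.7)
The plan is to simply verify the dimensions of the parity-check matrix $M(\mathcal{D})$ and apply the standard lower bound on the rate coming from those dimensions (the same approach used in the previous rate propositions in this paper, e.g.\ for $M(D,D+1,N)$ and $M_{t-1}$).

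First I would unpack the Kautz--Singleton construction to determine the size of $M(\mathcal{D})$. Since $\mathcal{D}$ is a $q$-ary $[n,k,d]$ code, it has exactly $q^k$ codewords, each of length $n$. Listing the codewords as columns and then replacing every symbol $\alpha_i \in \mathbb{F}_q$ with the binary column vector $\mathbf{e}_i \in \mathbb{F}_2^q$ produces a binary matrix with $q^k$ columns and $nq$ rows. Hence $M(\mathcal{D})$ is an $nq \times q^k$ matrix, and the code $\mathcal{C} = \mathcal{C}(M(\mathcal{D}))$ has length $q^k$.

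Next, since the dimension of $\mathcal{C}$ is $q^k - \mathrm{rank}(M(\mathcal{D}))$ and $\mathrm{rank}(M(\mathcal{D})) \leq nq$, we obtain $\dim(\mathcal{C}) \geq q^k - nq$. Dividing by the length $q^k$ yields the rate bound
\[
R \;=\; \frac{\dim(\mathcal{C})}{q^k} \;\geq\; \frac{q^k - nq}{q^k} \;=\; 1 - \frac{n}{q^{k-1}},
\]
as claimed.

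There is no serious obstacle here; the argument is essentially a bookkeeping exercise. The only subtle point worth mentioning is that the inequality (rather than equality) comes from the fact that the $nq$ rows of $M(\mathcal{D})$ need not be linearly independent over $\mathbb{F}_2$ — indeed, the $q$ rows coming from a single coordinate of $\mathcal{D}$ form blocks of permutation-style structure whose rank is at most $q$ (and in fact one can often spot redundancies), so in general $\mathrm{rank}(M(\mathcal{D})) < nq$ and the rate lower bound is not tight.
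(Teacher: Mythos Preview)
Your proof is correct and follows exactly the same approach as the paper: observe that $M(\mathcal{D})$ is an $nq \times q^k$ matrix and apply the standard bound $R \geq 1 - \frac{nq}{q^k} = 1 - \frac{n}{q^{k-1}}$. The paper's version is just slightly terser, omitting the explicit rank discussion.
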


\begin{proof}
    Since $M(\mathcal{D})$ has $nq$ rows and $q^k$ columns, we have 
    \[ R\geq \frac{q^k-nq}{q^k}=\frac{q^{k-1}-n}{q^{k-1}}=1-\frac{n}{q^{k-1}}. \]
\end{proof}

\begin{corollary}
    By Proposition~\ref{prop:mindis}, $d_{\min}(\mathcal{C})\geq D+2=\left\lfloor\frac{n-1}{n-d}\right\rfloor+2$. 
\end{corollary}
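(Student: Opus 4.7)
The corollary is essentially a direct consequence of two earlier results chained together, so the plan is simply to identify the disjunct parameter of $M(\mathcal{D})$ and then invoke Proposition~\ref{prop:mindis}. The work has almost entirely been done in the discussion preceding the statement, so the proof proposal is more of a bookkeeping exercise than a substantive argument.

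My plan is the following. First, I would observe that $M(\mathcal{D})$ satisfies the hypotheses of Lemma~\ref{lemma-disjunct}: each column is obtained by replacing the $n$ symbols of a codeword of $\mathcal{D}$ by the corresponding standard basis vectors $\mathbf{e}_i \in \mathbb{F}_2^q$, so every column contains exactly $n$ ones and $w_{\min} = n$. For two distinct codewords $\mathbf{c}, \mathbf{c}' \in \mathcal{D}$, the supports of the corresponding columns of $M(\mathcal{D})$ intersect in a position exactly when $\mathbf{c}$ and $\mathbf{c}'$ agree in the associated coordinate; since the minimum distance of $\mathcal{D}$ is $d$, two distinct codewords agree in at most $n-d$ coordinates, and hence $a_{\max} \leq n - d$. (This maximum is actually attained, as noted in the preceding discussion, by any pair of codewords at distance exactly $d$.)

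Next, I would apply Lemma~\ref{lemma-disjunct} to conclude that $M(\mathcal{D})$ is $D$-disjunct with
\[ D = \left\lfloor \frac{w_{\min}-1}{a_{\max}} \right\rfloor = \left\lfloor \frac{n-1}{n-d} \right\rfloor. \]

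Finally, I would invoke Proposition~\ref{prop:mindis}, which says that any code with a $D$-disjunct parity-check matrix has minimum distance at least $D+2$. Applying this to $\mathcal{C} = \mathcal{C}(M(\mathcal{D}))$ yields
\[ d_{\min}(\mathcal{C}) \geq D + 2 = \left\lfloor \frac{n-1}{n-d} \right\rfloor + 2, \]
as claimed. There is no real obstacle here; the only thing to be slightly careful about is confirming that the column weight is exactly $n$ and that the maximum column overlap equals $n-d$, both of which follow immediately from the Kautz-Singleton construction and the definition of minimum distance of $\mathcal{D}$. As Example~\ref{ex:RS} illustrates, this bound need not be tight.
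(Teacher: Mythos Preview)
Your proposal is correct and follows exactly the same approach as the paper: the discussion preceding the corollary establishes that $w_{\min}=n$ and $a_{\max}=n-d$, so Lemma~\ref{lemma-disjunct} gives $D=\lfloor (n-1)/(n-d)\rfloor$, and the corollary is then an immediate application of Proposition~\ref{prop:mindis}. The paper does not even supply a separate proof, treating the statement as a direct consequence of the preceding discussion, so your write-up is, if anything, more explicit than the original.
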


In practice, the value of $d_{\min}(\mathcal{C})$ is greater than the general bound for $D$-disjunct matrices;  see  Example~\ref{ex:RS}.  

\begin{proposition} The value $d_{\min}(\mathcal{C}(M(\mathcal{D})))$ equals the size of a minimal set of columns of $M(\mathcal{D})$ such that the binary sum of the columns is the zero vector. 
    When $q=2^r$, a necessary but not sufficient condition for a set of columns $\mathcal{S}$ of the matrix $M(\mathcal{D})$ to sum to zero is that the sum of the corresponding codewords in $\mathcal{D}$ must be the zero codeword. 
\end{proposition}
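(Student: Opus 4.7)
The plan is to handle the two assertions separately; both reduce to bookkeeping about the block structure of $M(\mathcal{D})$ and the fact that $q = 2^r$ places us in characteristic $2$.

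For the first assertion, I would simply invoke the standard fact recalled in Section~\ref{sec:codingtheory}: the minimum distance of a code $\mathcal{C}(H)$ equals the smallest number of linearly dependent columns of $H$. Since $M(\mathcal{D})$ is a binary matrix, a collection of columns is linearly dependent over $\mathbb{F}_2$ if and only if some nonempty subset of those columns has binary sum equal to the zero vector, and a minimal dependence is supported on a minimal zero-summing subset. Thus the equality is immediate.

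For necessity in the second assertion, I would unpack the block structure. Each column of $M(\mathcal{D})$ is a stack of $n$ blocks of length $q$, where the $i$-th block is the standard basis vector $\mathbf{e}_{j}$ indexed by the value $\alpha_j$ of the $i$-th coordinate of the associated codeword. Suppose $\mathcal{S}$ is a set of columns whose binary sum is $\mathbf{0}$. A block-wise inspection gives that, for each position $i \in [n]$ and each $\alpha_j \in \mathbb{F}_q$, the number $n_{i,j}$ of codewords in $\mathcal{S}$ whose $i$-th coordinate equals $\alpha_j$ is even. The $i$-th coordinate of the $\mathbb{F}_q$-sum of those codewords is $\sum_{j=1}^q n_{i,j}\alpha_j$, and since $q=2^r$ implies characteristic $2$, every term vanishes. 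Hence the sum of the codewords is the zero codeword.

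For insufficiency, I would exhibit a small counterexample using the running example, namely $q = 4$ with $\mathcal{D}$ the $[3,2,2]_4$ Reed-Solomon code of Example~\ref{ex:RS}, which contains all four constant codewords $(\alpha,\alpha,\alpha)$ for $\alpha \in \mathbb{F}_4$. Because $\sum_{\alpha \in \mathbb{F}_{2^r}} \alpha = 0$ whenever $r \geq 2$ (the coefficient of $x^{q-1}$ in $x^q - x$ vanishes), these four codewords sum to zero in $\mathcal{D}$. Yet in $M(\mathcal{D})$ the corresponding four columns contribute, in each of the $n=3$ blocks, the four standard basis vectors $\mathbf{e}_1,\mathbf{e}_2,\mathbf{e}_3,\mathbf{e}_4$, whose binary sum is the all-ones vector, not $\mathbf{0}$. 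I do not anticipate any real obstacle: the proof is essentially a careful reading of the Kautz--Singleton construction, and the main content is producing a clean witness for the failure of the converse.
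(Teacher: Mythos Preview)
Your proposal is correct and follows essentially the same approach as the paper: both invoke the standard minimum-distance/linear-dependence relationship for the first assertion, argue necessity via the observation that a zero binary sum forces each symbol to appear an even number of times in each position (hence the $\mathbb{F}_q$-sum vanishes in characteristic $2$), and witness insufficiency with the four constant codewords from Example~\ref{ex:RS}. Your write-up is in fact a bit more careful than the paper's, since you explicitly justify why $\sum_{\alpha\in\mathbb{F}_4}\alpha=0$ and spell out the block sum as the all-ones vector.
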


\begin{proof} The first part of the proposition translates to the context of the matrix $M(\mathcal{D})$ the well-known relationship between a minimal linear dependence among the columns of a parity-check matrix $H$  to the minimum distance of the code. 

    The condition is necessary because if the sum of the columns of $M(\mathcal{D})$ is the zero vector, then each codeword symbol appears an even number of times in each position, meaning that the corresponding original codewords in $\mathcal{D}$ must sum to zero,  since the field has characteristic 2. 

    The condition is not sufficient because a collection of codewords in $\mathcal{D}$ may sum to zero in $\mathbb{F}_q^n$ while the sum of the binary columns in $M(\mathcal{D})$ is not zero. 
    
    For example, consider the first four columns of $M$ in Example~\ref{ex:RS}, which originate with the constant polynomials $0, 1, \alpha, \alpha^2$, and thus yield codewords $(0, 0, 0)^T$, $(1, 1, 1)^T$, $(\alpha, \alpha, \alpha)^T$, $(\alpha^2, \alpha^2, \alpha^2)^T$. The sum of these codewords in $\mathcal{D}$ is the zero codeword, but the sum of the corresponding binary columns in $M(\mathcal{D})$ is not zero. 
\end{proof}

\begin{remark}
    It is an open problem to find $d_{\min}(\mathcal{C}(M(\mathcal{D})))$. Finding $d_{\min}$
    is equivalent to finding a smallest subset $S$ of codewords in $\mathcal{C}$ with the property that in each position $i=1, \ldots, n$, and for each $\alpha \in \mathbb{F}_q$, the number of codewords in $S$ with $c_i=\alpha$ is even. 
\end{remark}

We conclude the subsection with a discussion of the density of these matrices. Regarding the density of ones in the matrix $M(\mathcal{D})$: there are $n$ ones per column, and $q^k$ columns in $M(\mathcal{D})$, so the density of ones in the matrix is $\frac{nq^k}{nq(q^k)}=\frac{1}{q}$. 
The number of ones in row $i$ is either $q^k$ if every codeword has  position $i$ equal to 0, or $q^{k-1}$ otherwise. When the starting code $\mathcal{D}$ is Reed-Solomon or otherwise MDS, there are no positions fixed at zero for all codewords, so in general we consider the case where the row weight is $q^{k-1}$ for each row. 

Codes with the family $M(\mathcal{D})$ as  parity-check matrices would be mostly moderate-density or high-density parity-check  codes. For fixed $q$, as $k\to \infty$, the row weight of $M(\mathcal{D})$ is $O(N)$. On the other hand, letting $k$ be fixed and allowing $q\to \infty$ (and thus $N\to \infty$), the row weight of $M(\mathcal{D})$ is  $O(\sqrt{N})$ when $k=2$.  Thus Proposition~\ref{prop:tillich} holds for codes constructed from a linear code $\mathcal{D}$ with parameters $[n, 2, d]_q$.  Further, if $\mathcal{D}$ is MDS, then $a_{\max}=k-1=1$. All codes with parity-check matrix $M(\mathcal{D})$ for such $\mathcal{D}$ can correct at least $\lfloor \frac{n}{2} \rfloor$ errors under a single round of majority-logic bit-flipping decoding, the optimal value for a code with parity-check matrix column weight $\lfloor \frac{n}{2} \rfloor$. The value of $n$ can be chosen to obtain a given rate for the code with parity-check matrix $M(\mathcal{D})$. For example,  a rate  $R\geq \frac{1}{2}$ would be guaranteed for $n=\frac{q}{2}$. In that case the number of correctable errors is $\frac{1}{4}$ the row weight of the matrix $M(\mathcal{D})$.

\section{Conclusions} \label{sec:conclusions}

There are two primary contributions of this work.

First, we introduced the idea of using disjunct and separable matrices as the parity-check matrices for parity-check codes. To this end, we provided bounds on the minimum distance and minimum stopping set size given an arbitrary $D$-disjunct or $D$-separable matrix as an initial avenue of research into this potential new class of constructions of parity-check codes. We also connected $D$-disjunct and $D$-separable properties of matrices to the Tanner graph representations of their corresponding parity-check codes.

Second, we provided three methods to algebraically construct codes from disjunct matrices that have optimal decoder performance under a single iteration of a bit-flipping majority-logic decoder. We indicated which parameters are required for these codes to be MDPC, a class of parity-check codes of particular interest for use in post-quantum cryptography.

For each of these three code families from disjunct matrices, we provided insight into the minimum distance, girth, stopping set size, and densities of their corresponding parity-check codes. Because of their highly structured nature, we were able to say the most about parity-check codes from Macula's subset construction in Section~\ref{sec:macula}. We were able to determine the exact minimum distance, stopping distance, and girth of these codes, and found that the stopping distance of these codes is the same as the minimum distance. Our exploration of the $t$-packing subset construction in Section~\ref{sec:tpacking} was not able to find an exact minimum distance, but the construction is flexible enough that some parameters could yield good codes. When considering parity-check codes from disjunct matrices from $q$-ary codes in Section~\ref{sec:qary}, we found that any parity-check code $\mathcal{C}(M(\mathcal{D}))$ built from an initial Reed-Solomon code $\mathcal{D}$ with dimension $k=2$ has girth at least 6 and has  optimal decoder performance under a single iteration of a bit-flipping majority-logic decoder.

This work opens up many new lines of questioning. There are additional classes of $D$-disjunct and $D$-separable matrices whose parameters for use in parity-check codes could be considered. While we considered stopping sets of the code families we looked at, we did not consider absorbing sets, which characterize the majority of decoder failures over the binary symmetric channel (BSC) \cite{dolecek2010absorbing}. For more information on classifying absorbing sets, see e.g. \cite{beemer2019absorbing, mcmillon2023extremal}. Finally, it would be interesting to examine how the structure of disjunct and separable matrices would affect iterative decoding algorithms and, in particular, if an efficient decoding algorithm specific to these matrices could be developed. As noted, the disjunct matrices in Section~\ref{sec:specific} with girth 6 have optimal decoder performance under a single iteration of a bit-flipping majority-logic decoder. It is an open question whether disjunct matrices yield optimal binary codes with respect to rate or minimum distance.  

In Section~\ref{sec:tpacking}, we noted that the configurations of $S(t,t+1,v)$'s that lead to minimum distance codewords in the given construction can be viewed as generalizations of the well-studied Pasch configuration in $S(2,3,v)$'s. The study of this generalization is an open avenue of research in design theory.

\section{Acknowledgements}

We would like to thank Tony Macula for suggesting that we look at his 1996 paper \cite{macula1996simple} for a class of matrices whose corresponding Tanner graphs have no $4$-cycles.

\bibliography{disjunct}

\begin{thebibliography}{10}

\bibitem{barg2017group}
A.~Barg and A.~Mazumdar.
\newblock Group testing schemes from codes and designs.
\newblock {\em IEEE Transactions on Information Theory}, 63(11):7131--7141,
  2017.

\bibitem{beemer2019absorbing}
A.~Beemer, K.~Haymaker, and C.~A. Kelley.
\newblock Absorbing sets of codes from finite geometries.
\newblock {\em Cryptography and Communications}, 11:1115--1131, 2019.

\bibitem{colbourn2010crc}
C.~J. Colbourn.
\newblock {\em {CRC} Handbook of Combinatorial Designs}.
\newblock CRC press, Florida, 2007.

\bibitem{di2002finite}
C.~Di, D.~Proietti, I.~E. Telatar, T.~J. Richardson, and R.~L. Urbanke.
\newblock Finite-length analysis of low-density parity-check codes on the
  binary erasure channel.
\newblock {\em IEEE Transactions on Information Theory}, 48(6):1570--1579,
  2002.

\bibitem{dolecek2010absorbing}
L.~Dolecek.
\newblock On absorbing sets of structured sparse graph codes.
\newblock In {\em 2010 Information Theory and Applications Workshop}, pages
  1--5, 2010.

\bibitem{du2000combinatorial}
D.~Du and F.~K. Hwang.
\newblock {\em Combinatorial Group Testing and its Applications}, volume~12.
\newblock World Scientific, Singapore, 2000.

\bibitem{dyachkov1982bounds}
A.~Dyachkov and V.~Rykov.
\newblock Bounds on the length of disjunctive codes.
\newblock {\em Problems of Control \& Information Theory}, 11:7--13, 1982.

\bibitem{fu2006novel}
H.-L. Fu and F.~K. Hwang.
\newblock A novel use of $t$-packings to construct $d$-disjunct matrices.
\newblock {\em Discrete Applied Mathematics}, 154(12):1759--1762, 2006.

\bibitem{gallager1962low}
R.~Gallager.
\newblock Low-density parity-check codes.
\newblock {\em IRE Transactions on Information Theory}, 8(1):21--28, 1962.

\bibitem{essential20}
V.~Guruswami, A.~Rudra, and M.~Sudan.
\newblock {\em Essential Coding Theory}.
\newblock Creative Commons License, California, 2019.
\newblock
  \url{https://cse.buffalo.edu/faculty/atri/courses/coding-theory/book/}.

\bibitem{hong2022group}
D.~Hong, R.~Dey, X.~Lin, B.~Cleary, and E.~Dobriban.
\newblock Group testing via hypergraph factorization applied to covid-19.
\newblock {\em Nature Communications}, 13(1):1837, 2022.

\bibitem{huffman2010fundamentals}
W.~C. Huffman and V.~Pless.
\newblock {\em Fundamentals of Error-correcting Codes}.
\newblock Cambridge University Press, New York, 2010.

\bibitem{hwang2006pooling}
F.~K.-m. Hwang and D.-z. Du.
\newblock {\em Pooling designs and nonadaptive group testing: important tools
  for DNA sequencing}, volume~18.
\newblock World Scientific, 2006.

\bibitem{inan2019optimality}
H.~A. Inan, P.~Kairouz, M.~Wootters, and A.~{\"O}zg{\"u}r.
\newblock On the optimality of the {K}autz-{S}ingleton construction in
  probabilistic group testing.
\newblock {\em IEEE Transactions on Information Theory}, 65(9):5592--5603,
  2019.

\bibitem{inan2020strongly}
H.~A. Inan and A.~Ozgur.
\newblock Strongly explicit and efficiently decodable probabilistic group
  testing.
\newblock In {\em 2020 IEEE International Symposium on Information Theory
  (ISIT)}, pages 525--530. IEEE, 2020.

\bibitem{johnson2003low}
S.~Johnson.
\newblock Low-density parity-check codes from combinatorial designs.
\newblock {\em Ph.D. Dissertation, The University of Newcastle, Callaghan, NSW,
  Australia}, 2003.

\bibitem{kautz1964nonrandom}
W.~Kautz and R.~Singleton.
\newblock Nonrandom binary superimposed codes.
\newblock {\em IEEE Transactions on Information Theory}, 10(4):363--377, 1964.

\bibitem{macula1996simple}
A.~J. Macula.
\newblock A simple construction of $d$-disjunct matrices with certain constant
  weights.
\newblock {\em Discrete Mathematics}, 162(1-3):311--312, 1996.

\bibitem{mazumdar2012almost}
A.~Mazumdar.
\newblock On almost disjunct matrices for group testing.
\newblock In {\em International Symposium on Algorithms and Computation}, pages
  649--658. Springer, 2012.

\bibitem{mazumdar2016nonadaptive}
A.~Mazumdar.
\newblock Nonadaptive group testing with random set of defectives.
\newblock {\em IEEE Transactions on Information Theory}, 62(12):7522--7531,
  2016.

\bibitem{mcmillon2023extremal}
E.~McMillon, A.~Beemer, and C.~A. Kelley.
\newblock Extremal absorbing sets in low-density parity-check codes.
\newblock {\em Advances in Mathematics of Communications}, 17(2), 2023.

\bibitem{misoczki2013mdpc}
R.~Misoczki, J.-P. Tillich, N.~Sendrier, and P.~S. Barreto.
\newblock {MDPC}-{M}c{E}liece: {N}ew {M}c{E}liece variants from moderate
  density parity-check codes.
\newblock In {\em 2013 IEEE International Symposium on Information Theory},
  pages 2069--2073, 2013.

\bibitem{ouzan2009moderate}
S.~Ouzan and Y.~Be'ery.
\newblock Moderate-density parity-check codes.
\newblock {\em arXiv preprint arXiv:0911.3262}, 2009.

\bibitem{tanner1981recursive}
R.~Tanner.
\newblock A recursive approach to low complexity codes.
\newblock {\em IEEE Transactions on Information Theory}, 27(5):533--547, 1981.

\bibitem{tillich2018decoding}
J.-P. Tillich.
\newblock The decoding failure probability of {MDPC} codes.
\newblock In {\em 2018 IEEE International Symposium on Information Theory},
  pages 941--945. IEEE, 2018.

\bibitem{wiberg1996codes}
N.~Wiberg.
\newblock Codes and decoding on general graphs.
\newblock {\em Ph.D. Dissertation, Linkoping University, Linkoping, Sweden},
  1996.

\end{thebibliography}
\bibliographystyle{abbrv} 

\end{document}